\documentclass[journal]{IEEEtran}
\usepackage[latin1]{inputenc}
\usepackage{amsmath}
\usepackage{graphicx}
\usepackage{amssymb}
\usepackage{xcolor}
\makeatletter

\newtheorem{lemma}{Lemma}

\newtheorem{thm}{Theorem}

\providecommand{\LyX}{L\kern-.1667em\lower.25em\hbox{Y}\kern-.125emX\@}

\def\BibTeX{{\rm B\kern-.05em{\sc i\kern-.025em b}\kern-.08em
    T\kern-.1667em\lower.7ex\hbox{E}\kern-.125emX}}

\makeatother
\begin{document}
\def\ZZ{{\mathbb Z}}
\def\RR{{\Bbb R}}
\def\NN{{\mathbb N}}
\def\CC{{\mathbb C}}

\title{Observer-Based Control of Linear Systems with Mismatched Input and Output Delays}
	\author{Hieu Trinh, Phan Thanh Nam and Tran Ngoc Nguyen
	\thanks{Hieu Trinh is with the School of Engineering, Deakin University, Waurn Ponds, 75 Pigdons Road, Geelong, Australia. (email:  hieu.trinh@deakin.edu.au)}
	\thanks{Phan Thanh Nam and Tran Ngoc Nguyen are with the Department of Mathematics, Quy Nhon University, Vietnam. (email:  phanthanhnam@qnu.edu.vn, tranngocnguyen@qnu.edu.vn)}	
}

\maketitle \maketitle \maketitle \thispagestyle{plain}
\pagestyle{plain}

\begin{abstract}

This paper investigates the stabilization of linear systems subject to simultaneous, mismatched time delays in both the control input and system output vectors. The proposed control framework is developed in two primary stages. First, an asymptotically stabilizing delayed state-feedback controller is synthesized by leveraging recent advancements in Linear Matrix Inequality (LMI) techniques. Second, this controller is realized using novel time-delay compensators \cite{trinhnam26}. This architecture successfully accommodates an output measurement delay $\tau_y$ that is independent of the input delay $\tau_u$, enabling direct estimation of the delayed state-feedback control law. The proposed methodology is then extended to target output controllers to account for simultaneous, mismatched time delays in both the control input and system output vectors.
\end{abstract}

\begin{keywords}
Time-delay compensators, delayed measurements, input delays, functional observers, target output controllers.
\end{keywords}

\section{System Description and Problem Statement}

We consider the following linear system with a time delay in the control input vector
\begin{align}
	\label{a1}
	\dot{x}(t)=Ax(t)+Bu(t-\tau_u),
\end{align}
where $x(t)\in \mathbb{R}^n$ is the state vector, $u(t)\in \mathbb{R}^r$ is the control input vector, and  $\tau_u>0$ is the time delay in the input vector. Matrices
$A\in\mathbb{R}^{n\times n}$ and $B\in\mathbb{R}^{n\times r}$ are constant. Without loss of generality, let $B$ be a matrix of full column rank. 

Due to sensing or communication constraints, the measured output vector is available with a time delay and is modeled as
\begin{align}
	\label{a2}
	y(t) = Cx(t-\tau_y),
\end{align}
where  $\tau_y>0$ is the time delay, $y(t)\in\mathbb{R}^{p}$ with $0<p\le n$, and
$C\in\mathbb{R}^{p\times n}$ is a constant matrix of full row rank.

In this paper, $A$ is not necessarily Hurwitz and we consider the case where neither the current state vector $x(t)$ nor the delayed state vector $x(t-\tau_u)$ is available for feedback. Instead, we utilize the delayed output vector (\ref{a2}), in which the measurements consist of a subset of the state variables subject to a delay $\tau_y$. Our goal is to develop an observer-based control strategy that ensures the stability of the closed-loop system in the presence of distinct time delays in both the input and output channels.

This paper is organized as follows. Section II presents the main theoretical results, divided into four subsections. Section II-A details an asymptotically stabilizing state-feedback controller synthesized using advanced LMI techniques. Section II-B addresses controller implementation by introducing novel time-delay compensators capable of handling an output measurement delay $\tau_y$ independently of the input delay $\tau_u$. In Section II-C, the proposed methodology is extended to design target output controllers \cite{Fernando2025} that accommodate mismatched time delays occurring simultaneously in both the control input and system output vectors. Section II-D introduces an alternative design for reduced-order delayed functional observers. These observers are constructed from a reduced-order subsystem, which is created by projecting the full state dynamics onto the row space of the target output matrix $F_o$. The output of this resulting subsystem forms a subset of the overall output vector. Finally, concluding remarks are given in Section III.

\section{Main Results}
Our design approach comprises two main steps. In the first step, we carry out the design of a state-feedback controller of the form
\begin{align}
	\label{a3}
	u(t-\tau_u)=Fx(t-\tau_u),
\end{align}
where $F\in \mathbb{R}^{r\times n}$ is the controller gain, to be found such that the following closed-loop time-delay system is asymptotically stable
\begin{align}
	\label{a4}
	\dot{x}(t)=Ax(t)+BFx(t-\tau_u).
\end{align}

In the second step, since neither $x(t)$ nor $x(t-\tau_u)$ is available for feedback, we design a functional observer to directly estimate the delayed functional
$$z(t):=u(t-\tau_u)=Fx(t-\tau_u)$$ using the delayed output vector $y(t)$ as defined in (\ref{a2}). Finally, the designed functional is used to implement the stabilizing
state-feedback control law (\ref{a3}).

\subsection{Design of a stabilizing feedback control law $u(t-\tau_u)=Fx(t-\tau_u)$}\label{asa}
For a given $\tau_u >0$, the problem of finding $F\in \mathbb{R}^{r\times n}$ such that the closed-loop time-delay system (\ref{a4}) is asymptotically stable is solved by adapting Lemma 11 in \cite{trinhnam26}. Let us recall the following definition and fact: The set of all complex numbers \( \lambda \in \mathbb{C} \) that satisfy the characteristic equation  
\[
\det\left(\lambda I - A - A_{\tau} e^{-\lambda \tau} \right) = \det\left(\lambda I - A^{\mathsf T} - A^{\mathsf T}_{\tau} e^{-\lambda \tau} \right)=0
\]
is referred to as the \emph{spectrum} of a time-delay system $\dot{x}(t)=Ax(t)+A_{\tau}x(t-\tau)$, and is denoted by \( \sigma(A + A_{\tau} e^{-\lambda \tau}) \), where \( \sigma(A + A_{\tau} e^{-\lambda \tau})=\sigma(A^{\mathsf T} + A^{\mathsf T}_{\tau} e^{-\lambda \tau}) \). 

By transposing both $A$ and $BF$, we can apply Lemma 11 from \cite{trinhnam26} to determine $F^{\mathsf T}$. Consequently, for a given $\tau_u$, a sufficient condition for the asymptotic stability of (\ref{a4}) is the feasibility of the LMI defined in Lemma 11 \cite{trinhnam26}. It should be noted that the LMI condition of Lemma 11 \cite{trinhnam26} remains feasible for input delays up to $\bar{\tau}_u$. 
\subsection{Design of a delayed functional observer based on delayed output measurement vector}\label{asb}
Since both the current state vector $x(t)$ and the delayed state vector $x(t-\tau_u)$ are unavailable for feedback, the proposed control law (\ref{a3}) is not directly implementable. To address this, we utilize the delayed output measurement vector $y(t)=Cx(t-\tau_y)$ to construct a functional observer. This observer is designed to directly estimate the delayed functional $$z(t)=Fx(t-\tau_u),$$ bypassing the need for full state reconstruction. 

We consider two practical scenarios: Scenario 1 addresses the case where $0 < \tau_y \leq \tau_u$, while Scenario 2 deals with the case where $\tau_y > \tau_u$.

\textit{Scenario 1:} When the time delay in the output measurement vector, $\tau_y$, is within the bounded interval $0 < \tau_y \leq \tau_u$, we can define a delayed output vector $y_{\alpha}(t)$ as follows
$$y_{\alpha}(t) = y(t - \alpha)$$where $\alpha = \tau_u - \tau_y \geq 0$. In this formulation, $y_{\alpha}(t)$ is the original output $y(t)$ delayed by $\alpha$. By utilizing this new output vector to design a functional observer for estimating $z(t)$, we effectively synchronize the system; the delays present in both the input and output vectors become identical. Designing a functional observer under these conditions is significantly more straightforward, as it reduces the complexity typically introduced by mismatched delay terms. In fact, the existence conditions of such an observer are the same as those functional observers reported in the literature (see, \cite{darouach2000}). This is elaborated in the following development.

Thus, for the case where $0 < \tau_y \leq \tau_u$, to estimate $z(t)$, we consider the following observer
\begin{align}
	\label{a5}
	\hat{z}(t)&=w(t)+My_{\alpha}(t),\\
	\label{a6}
	\dot{w}(t)&=Nw(t)+Gy_{\alpha}(t)+Ju(t-2\tau_u),
\end{align} 
where $\hat{z}(t)$ is the estimate of $z(t)$. The matrices $M$, $N$, $G$ and $J$ are to be determined so that
$\hat{z}(t)\to z(t)$ asymptotically.

Defining the estimation error vector $e(t)=\hat z(t)-z(t)$, the error dynamics are given by
\begin{align}
	\label{a7}
	\dot{e}(t)&=\dot{w}(t)+M\dot{y}_{\alpha}(t)-F\dot{x}(t-\tau_u)\nonumber\\ \quad &=Ne(t)+\mathcal{C}_{1}x(t-\tau_u)+\mathcal{C}_{2}u(t-2\tau_u),
\end{align}
where\\ 
$\mathcal{C}_1 =NF+\bar{G}C+MCA-FA$, \quad $\mathcal{C}_2 =J+MCB-FB$, \quad $\bar{G}:=G-NM$, \quad $\mathcal{C}=\begin{pmatrix}
	\mathcal C_1 &
	\mathcal C_2
\end{pmatrix}$.

The following theorem characterizes the necessary and sufficient conditions for the existence of observer (\ref{a5})-(\ref{a6}).

\begin{thm}\label{thm:1p3}
	For $0 < \tau_y \leq \tau_u$, let $\alpha = \tau_u - \tau_y \geq 0$ and define the shifted output vector $y_\alpha(t) = y(t - \alpha) = Cx(t - \tau_u)$. Then observer (\ref{a5})-(\ref{a6}) of order $r=\mathrm{rank}(F)$ provides
	asymptotic estimation of the functional $z(t)=Fx(t-\tau_u)$ and yields
	estimation error dynamics that are decoupled from the plant state and input
	if and only if $\mathcal C=\bf 0$ and $N$ is Hurwitz.
	In this case, the estimation error satisfies
	\[
	e(t)=\hat z(t)-z(t)\to {\bf 0} \quad \text{as} \quad t\to\infty
	\]
	for all admissible initial conditions and inputs $u(\cdot)$.
\end{thm}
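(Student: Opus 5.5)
The plan is to verify the error equation (\ref{a7}) first, then prove sufficiency and necessity separately. The decoupling requirement is read as: the term $\mathcal C\,\mathrm{col}(x(t-\tau_u),u(t-2\tau_u))$ must vanish identically along every trajectory.

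\textbf{Error dynamics.} Since $y_\alpha(t)=y(t-\alpha)=Cx(t-\alpha-\tau_y)=Cx(t-\tau_u)$, (\ref{a1}) gives
\[
\dot y_\alpha(t)=CAx(t-\tau_u)+CBu(t-2\tau_u).
\]
Substituting $w(t)=e(t)+Fx(t-\tau_u)-MCx(t-\tau_u)$ from (\ref{a5}) into (\ref{a6}) yields
\[
\dot w(t)=Ne(t)+(NF-NMC+GC)x(t-\tau_u)+Ju(t-2\tau_u).
\]
Combining this with $M\dot y_\alpha(t)-F\dot x(t-\tau_u)$ gives exactly (\ref{a7}), with $\bar G=G-NM$ collecting the $C$-terms.

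\textbf{Sufficiency.} If $\mathcal C=\mathbf 0$, then $\dot e=Ne$, so $e(t)=e^{N(t-t_0)}e(t_0)$. This error is independent of $x$ and $u$. When $N$ is Hurwitz, $e(t)\to\mathbf 0$ exponentially for every initial condition $e(t_0)$, which is determined by $w(t_0)$ and the initial functions. This holds for every input $u(\cdot)$, because $u$ no longer enters the error equation.

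\textbf{Necessity.} Suppose the error dynamics are decoupled from the plant state and input, i.e.\ $\dot e=Ne$ for all trajectories. Then $\mathcal C_1x(t-\tau_u)+\mathcal C_2u(t-2\tau_u)\equiv0$ for all admissible initial functions and inputs.

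The main obstacle is arguing that the pair $(x(t-\tau_u),u(t-2\tau_u))$ can take arbitrary values in $\mathbb R^{n+r}$ at some instant. The approach is as follows.
\begin{itemize}
\item For $t\in[t_0,t_0+\tau_u]$, $x(t-\tau_u)$ equals the freely chosen initial function $\phi$.
\item On the same interval, $u(t-2\tau_u)$ is the freely chosen initial input history.
\item Since these two are chosen independently, pointwise values range over all of $\mathbb R^n\times\mathbb R^r$.
\end{itemize}
Hence $\mathcal C_1=\mathbf 0$ and $\mathcal C_2=\mathbf 0$, i.e.\ $\mathcal C=\mathbf 0$.

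Alternatively, one can avoid initial functions: at any later time, choose $u$ arbitrarily and appeal to linearity and independence over an interval. With $\mathcal C=\mathbf 0$, asymptotic convergence $e(t)\to\mathbf 0$ for every $e(t_0)\in\mathbb R^r$ forces all solutions of $\dot e=Ne$ to decay, so $N$ is Hurwitz. The order $r=\mathrm{rank}(F)$ plays no role beyond fixing the dimension of $N$, assuming without loss of generality that $F$ has full row rank.
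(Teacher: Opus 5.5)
Your proposal is correct. It is the standard argument the paper defers to through Theorem 1 of \cite{trinhnam1}: derive (\ref{a7}), note that $\mathcal C=\mathbf 0$ gives $\dot e=Ne$, and for necessity use that $(x(t-\tau_u),u(t-2\tau_u))$ and $e(t_0)$ (via the free $w(t_0)$) are arbitrary.

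One step needs tidying. Since (\ref{a1}) has no state delay, $x$ on $[t_0-\tau_u,t_0]$ must itself satisfy (\ref{a1}) for (\ref{a7}) to hold there, so it is not a free initial function. Instead, take $u\equiv\mathbf 0$ with arbitrary $x(t_0-\tau_u)$ to get $\mathcal C_1=\mathbf 0$, then take arbitrary $u$ to get $\mathcal C_2=\mathbf 0$.
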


\textit{Proof:} The proof can be constructed based on the same logic as presented in the proof of Theorem 1 in \cite{trinhnam1}. Thus, it is omitted here.

\textit{Remark 1:} The conditions $\mathcal C = \mathbf{0}$ and $N$ being Hurwitz are equivalent to $J=FB-MCB$, and conditions (10) and (20) in \cite{darouach2000}. The detailed proof is omitted and left to the reader for verification.

We have demonstrated that for the case $0 < \tau_y \leq \tau_u$, the observer defined in (\ref{a5})-(\ref{a6}) effectively utilizes the output delay vector $y_{\alpha}(t)$ to estimate the functional $z(t)=Fx(t-\tau_u)$. Furthermore, our analysis confirms that the existence conditions for this observer align with those established for functional observers in \cite{darouach2000}. Consequently, standard design procedures (for example, \cite{darouach2000}, \cite{trinhfer2012}) can be employed to determine the observer gains $M$, $N$, and $G$.

Observer (\ref{a5})-(\ref{a6}) is now used to realize the state-feedback control law (\ref{a3}). Firstly, note that our designed observer ensures 	\[
e(t)=\hat z(t)-z(t)\to {\bf 0} \quad \text{as} \quad t\to\infty.
\]
By substituting \[u(t-\tau_u)=\hat{z}(t)=e(t)+Fx(t-\tau_u)\] into (\ref{a1}), we obtain
\begin{align}
	\label{a8}
	\dot{x}(t)&=Ax(t)+Bu(t-\tau_u)\nonumber\\&=Ax(t)+BFx(t-\tau_u)+Be(t).
\end{align}
The state-feedback gain $F$ is designed to stabilize (\ref{a4}) asymptotically. Likewise, the error dynamics (\ref{a7}) are asymptotically stable since $N$ is a Hurwitz matrix. 

Let us now augment (\ref{a8}) and $\dot{e}(t)=Ne(t)$ to form the following \textit{state-error} system 
\begin{align}
	\label{a9}
	\begin{pmatrix}\dot{x}(t)\\\dot{e}(t)\end{pmatrix}&=\begin{pmatrix}A &B\\ \bf 0 &N\end{pmatrix}\begin{pmatrix}x(t)\\e(t)\end{pmatrix}+\begin{pmatrix}BF &\bf 0\\ \bf 0 &\bf 0\end{pmatrix} \begin{pmatrix}x(t-\tau_u)\\e(t-\tau_u)\end{pmatrix}.
\end{align}
Equation (\ref{a9}) clearly shows that its spectrum is $\sigma(A+BF e^{-\lambda \tau_u}) \cup \sigma(N)$, which ensures the asymptotic stability of the system.

On the other hand, by substituting \[u(t-\tau_u)=\hat{z}(t)=w(t)+My_{\alpha}(t)\] into (\ref{a1}) and (\ref{a6}), we obtain the following closed-loop \textit{state-observer} system
\begin{align}
	\label{a10}
	\begin{pmatrix}\dot{x}(t)\\\dot{w}(t)\end{pmatrix}&=\begin{pmatrix}A &B\\ \bf 0  &N\end{pmatrix}\begin{pmatrix}x(t)\\w(t)\end{pmatrix}+\begin{pmatrix}BMC &\bf 0\\ GC &J\end{pmatrix} \begin{pmatrix}x(t-\tau_u)\\w(t-\tau_u)\end{pmatrix}\nonumber\\&+\begin{pmatrix}\bf 0 &\bf 0\\ JMC &\bf 0\end{pmatrix} \begin{pmatrix}x(t-2\tau_u)\\w(t-2\tau_u)\end{pmatrix}. 
\end{align}

\textit{Example 1:} Let $\tau_u=\tau_y=0.5\text{s}$, \[A=\begin{pmatrix} 0  &   1\\-1  & 1 \end{pmatrix}, \quad B=\begin{pmatrix} 0\\1 \end{pmatrix}, \quad C=\begin{pmatrix} 1 &0 \end{pmatrix}. \]
Our objective is to design an observer-based controller to stabilize the system when both the input and output vectors are subject to a time delay of $0.5\text{s}$. Note that $A$ is not Hurwitz as $\sigma(A)=\{0.5\pm j0.866\}.$

Firstly, by using Lemma 11 in \cite{trinhnam26}, we find $F\in\mathbb{R}^{1\times 2}$ such that the closed-loop system (\ref{a4}) is asymptotically stable.

By transposing both $A$ and $BF$ and adopting the notation from Lemma 11 \cite{trinhnam26}, we obtain
\[A^{\mathsf T}:=N_{0,1}+ZN_{0,2}=\begin{pmatrix} 0  &   -1\\1  & 1 \end{pmatrix},\] 
\[F^{\mathsf T}B^{\mathsf T}:=N_{\tau,1}+ZN_{\tau,2}=Z\begin{pmatrix} 0  &1 \end{pmatrix},\]
where $N_{0,1}=\begin{pmatrix} 0  &   -1\\1  & 1 \end{pmatrix}$, $N_{0,2}=\begin{pmatrix} 0  &0 \end{pmatrix}$, $N_{\tau,1}=\textbf{ 0}_{2\times 2}$, $F^{\mathsf T}=Z$, and $B^{\mathsf T}=N_{\tau,2}=\begin{pmatrix} 0  &1 \end{pmatrix}$.

With $\tau_u=0.5$, $N_{0,1}$, $N_{0,2}$, $N_{\tau,1}$ and $N_{\tau,2}$ as given above, the LMI in Lemma 11 \cite{trinhnam26} is feasible for $\lambda=1$, and we obtain
\[Z=F^{\mathsf T}=\begin{pmatrix} 0.6979\\
	-1.2628 \end{pmatrix}.\]
Consequently, the state-feedback control law 
\[u(t-0.5)=Fx(t-0.5)=\begin{pmatrix} 0.6979 &-1.2628 \end{pmatrix}x(t-0.5),\]
guarantees the asymptotic stability of the following closed-loop time-delay system
\[\dot{x}(t)=\begin{pmatrix} 0  &   1\\-1  & 1 \end{pmatrix}x(t)+\begin{pmatrix} 0  &   0\\0.6979  & -1.2628 \end{pmatrix}x(t-0.5).\]

Following the methodology in \cite{wu}, those stable eigenvalues nearest to the imaginary axis are found to be $\{-0.5664 \pm j1.4192, -0.59\}$. Furthermore, the LMI condition presented in Lemma 11 \cite{trinhnam26} remains feasible for input delays up to $\bar{\tau}_u = 0.95\text{s}$. 

Next, we design a functional observer to estimate the delayed functional
$$z(t)=\begin{pmatrix} 0.6979 & -1.2628 \end{pmatrix}x(t-0.5).$$
This estimation utilizes the delayed output measurement $y(t)=x_1(t-0.5)$. Since the conditions defined in (10) and (20) of \cite{darouach2000} are satisfied, we have the flexibility to assign the pole of $N$ anywhere within the complex $s$-plane. By selecting $N=-4$, we obtain the following functional observer
\[\hat{z}(t)=w(t)-5.6163y_{\alpha}(t),\]
\[\dot{w}(t)=-4w(t)+26.5196y_{\alpha}(t)-1.2628u(t-1).\]

The combination of the designed controller and observer yields an asymptotically stable closed-loop state-observer system (\ref{a10}). Figure \ref{fig1paper3} shows the trajectories of $x_1(t)$ and  $x_2(t)$. It is clear
that asymptotic stability of the closed-loop system has been achieved.
\begin{figure}[!h]
	\centering
	\includegraphics[width=0.9\linewidth]{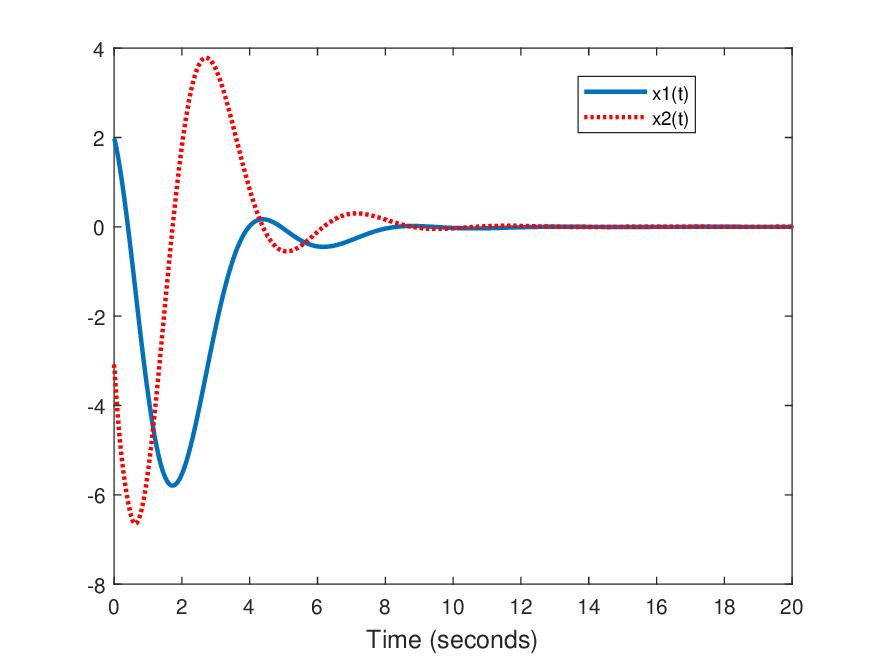}
	\caption{Trajectories of $x_1(t)$ and $x_2(t)$}
	\label{fig1paper3}
\end{figure}

\textit{Scenario 2:} We consider the case where $\tau_y > \tau_u$. 

Given the mismatched delay conditions, our objective is to design an asymptotic observer that directly estimates the delayed functional $z(t)=Fx(t-\tau_u)$ using the delayed output vector $y(t)=Cx(t-\tau_y)$. 
Building on established techniques for instantaneous functional estimation \cite{trinhnam26, trinhnam1}, our method focuses on the direct reconstruction of the \textit{delayed} state functional.

Let us look at the following observer which incorporates an internal delay dynamics and it is driven
by the delayed output measurements and input signals. 
\begin{align}
	\label{a11}
	\hat{z}(t)&= w(t)+My(t-\tau_u),\\
	\label{a12}
	\dot{w}(t)&=Nw(t)+N_{\tau_y}w(t-\tau_y)+Gy(t-\tau_u)  \nonumber\\
	&+G_1y(t-\tau_u-\tau_y)+ Ju(t-2\tau_u)	\nonumber\\
	&+ J_1u(t-2\tau_u-\tau_y), 
\end{align}
with $w(\theta)=\rho(\theta)$  for $\theta\in[-\tau_y,0]$ where $w(t)\in\mathbb{R}^r$.
The matrices $M$, $N$, $N_{\tau_y}$, $G$, $G_1$, $J$ and $J_1$ are to be determined so that
$\hat{z}(t)\to z(t)$ asymptotically.

Defining the estimation error vector $e(t)=\hat z(t)-z(t)$, the error dynamics are given by
\begin{align}
	\label{a13}
	\dot{e}(t)&=\dot{w}(t)+M\dot{y}(t-\tau_u)-F\dot{x}(t-\tau_u)\nonumber\\ &=Ne(t)+N_{\tau_y}e(t-\tau_y)+\mathcal{\bar{C}}_{1}x(t-\tau_u)\nonumber\\ &+\mathcal{\bar{C}}_{2}x(t-\tau_u-\tau_y)+\mathcal{\bar{C}}_{3}x(t-\tau_u-2\tau_y)\nonumber\\ &+\mathcal{\bar{C}}_{4}u(t-2\tau_u)+\mathcal{\bar{C}}_{5}u(t-2\tau_u-\tau_y),
\end{align}
where\\ 
$\mathcal{\bar{C}}_1 =NF-FA$, \ $\mathcal{\bar{C}}_2 =N_{\tau_y}F+\bar{G}C+MCA$, \ $\mathcal{\bar{C}}_3 =\bar{G}_1C$, \ $\mathcal{\bar{C}}_4 =J-FB$, \ $\mathcal{\bar{C}}_5 =J_1+MCB$, \ $\bar{G}:=G-NM$, $\bar{G}_1:=G_1-N_{\tau_y}M$, and $\mathcal{\bar{C}}=\begin{pmatrix}
	\mathcal {\bar{C}}_1 &
	\mathcal {\bar{C}}_2 & \mathcal {\bar{C}}_3 &\mathcal {\bar{C}}_4 &\mathcal {\bar{C}}_5
\end{pmatrix}$.

\textit{Remark 2:} Based on the error dynamics equation presented above, the existence conditions are established. These conditions require that $\bar{\mathcal{C}} = \mathbf{0}$ and that the system described by the following delayed differential equation is asymptotically stable \begin{align}
	\label{a14}\dot{e}(t) = Ne(t) + N_{\tau_y}e(t - \tau_y).
\end{align}

Notably, the existence conditions for the observer in (\ref{a11})-(\ref{a12}) are identical to those established in Theorem 1 of \cite{trinhnam26}. As a result, the systematic design procedure detailed in \cite{trinhnam26} can be directly applied to calculate the observer gains $M$, $N$, $N_{\tau_y}$, $G$, $G_1$, $J$ and $J_1$.

Let us pause here for an illustrative example.

\textit{Example 2:} Reconsidering Example 1, recall that the LMI condition from Lemma 11 \cite{trinhnam26} is feasible for input delays up to $\bar{\tau}_u = 0.95$s. We now examine a scenario where the output delay $\tau_y = 1.1$s exceeds this threshold ($\tau_y > \bar{\tau}_u$). In this case, our objective is to design a functional observer to estimate the following functional from Example 1
$$z(t) = \begin{pmatrix} 0.6979 & -1.2628 \end{pmatrix} x(t - 0.5)$$
using the delayed output measurement
\[y(t)=Cx(t-1.1)=x_1(t-1.1).\]
This estimation is performed to facilitate the implementation of an observer-based closed-loop control system.

Given $F=\begin{pmatrix} 0.6979 & -1.2628 \end{pmatrix}$ and $A = \begin{pmatrix} 0 & 1 \\ -1 & 1 \end{pmatrix}$, the standard rank condition is not satisfied, as $\text{rank} \begin{pmatrix} FA \\ F \end{pmatrix} > \text{rank}(F)$. 

\textit{Remark 3:} Following the methodology in \cite{trinhnam26}, we increase the order of the observer by augmenting the original functional $z(t)=Fx(t-\tau_u)$ with an auxiliary functional $z_a(t)=Rx(t-\tau_u)$. The resulting augmented functional is defined as
$$z_{\text{aug}}(t)=\begin{pmatrix} z(t)\\z_a(t)\end{pmatrix}=\begin{pmatrix} F\\R\end{pmatrix} x(t-\tau_u)=\bar{F}x(t-\tau_u),$$where $\bar{F}:=\begin{pmatrix} F\\R \end{pmatrix}\in\mathbb{R}^{q \times n}$. The matrix $R \in \mathbb{R}^{(q-r) \times n}$ is systematically constructed (per \cite{trinhnam1}) to satisfy the rank condition
$$\text{rank} \begin{pmatrix} \bar{F}A \\ \bar{F} \end{pmatrix} = \text{rank}(\bar{F}).$$By designing a functional observer of order $q$ to estimate $z_{\text{aug}}(t)$, the original estimate $\hat{z}(t)$ can be extracted via
$$\hat{z}(t) = \begin{pmatrix} I_r & \mathbf{0}_{r \times (q-r)} \end{pmatrix} \hat{z}_{\text{aug}}(t) = K\hat{z}_{\text{aug}}(t),$$where $K := \begin{pmatrix} I_r & \mathbf{0}_{r \times (q-r)} \end{pmatrix}$.

For this specific example, choosing $R = \begin{pmatrix} 0 & 1 \end{pmatrix}$ ensures that the rank condition is satisfied. With $\bar{F}=\begin{pmatrix} F\\R \end{pmatrix}=\begin{pmatrix}0.6979 & -1.2628\\0 &1 \end{pmatrix}$, we obtain \[N=\bar{F}A\bar{F}^-=\begin{pmatrix}1.8095  &  1.7202\\
	-1.4329  & -0.8095 \end{pmatrix}.\]

In this case, matrix $\begin{pmatrix} C \\ CA \end{pmatrix}$ has a rank of 2. Since it is full rank, we can follow the approach detailed in Case 1 \cite{trinhnam26}. This involves first addressing the stabilization problem by determining a gain $N_{\tau}$ that ensures the time-delay error system (\ref{a14}) is asymptotically stable.
Numerical analysis reveals that the LMI condition proposed in Lemma 11 \cite{trinhnam26} maintains feasibility for output delays up to $\bar{\tau}_y = 1.26\text{s}$, a value significantly larger than $\bar{\tau}_u$. 

Now, as for $\tau_y = 1.1\text{s}$, the LMI condition in Lemma 11 \cite{trinhnam26} is feasible, and we obtain 
\[N_{\tau_y}=\begin{pmatrix} -1.2014 &-1.0186\\0.8709 &0.3318 \end{pmatrix}.
\]
Based on the design procedure as presented in Case 1 \cite{trinhnam26}, we obtain the following second-order observer
\begin{align}
	&\hat{z}(t)=w_1(t)
	-0.4986y(t-0.5),\nonumber\\ &\dot{w}(t)=\begin{pmatrix}1.8095  &  1.7202\\
		-1.4329  & -0.8095 \end{pmatrix}w(t)+\begin{pmatrix}1.2573\\-0.5150 \end{pmatrix}y(t-0.5) \nonumber\\&+ \begin{pmatrix} -1.2014 &-1.0186\\0.8709 &0.3318 \end{pmatrix}w(t-1.1)+\begin{pmatrix}-1.2628\\1 \end{pmatrix}u(t-1)\nonumber\\&+\begin{pmatrix}-0.1832\\-0.1794 \end{pmatrix}y(t-1.6).\nonumber
\end{align}
The combination of the designed controller and observer yields an asymptotically stable closed-loop system. Figure \ref{fig2paper3} shows the trajectories of $x_1(t)$ and  $x_2(t)$. It is clear
that asymptotic stability of the closed-loop system has been achieved.
\begin{figure}[!h]
	\centering
	\includegraphics[width=0.9\linewidth]{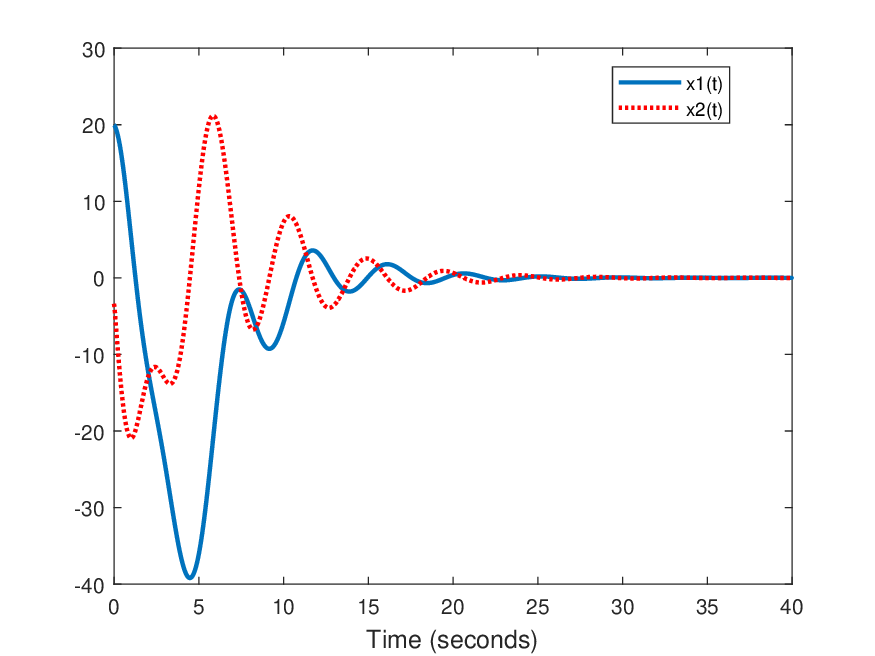}
	\caption{Trajectories of $x_1(t)$ and $x_2(t)$}
	\label{fig2paper3}
\end{figure}

\textit{Remark 4:} While the LMI condition in Lemma 11 \cite{trinhnam26} is technically feasible for output delays up to $\bar{\tau}_y$, the observer structure in \eqref{a11}-\eqref{a12} fails to remain effective if $\bar{\tau}_y \leq \bar{\tau}_u$. To address this, the permissible output delay $\tau_y$ must be expanded. Following the approach in \cite{trinhnam26}, this is accomplished by using an augmented measurement vector within a generalized observer framework that utilizes multiple internal delay channels.

Alternatively, under the stricter assumption that the full delayed state vector is measurable (i.e., $\text{rank}(C) = n$), a functional observer can be designed to estimate
$z(t) = Fx(t - \tau_u)$. This approach remains viable for significantly larger output delays where $\tau_y > \tau_u$. In this configuration, the observer acts as a time-delay compensator, leveraging heavily delayed state measurements to reconstruct a more current (less-delayed) state estimate.

Accordingly, assuming the available delayed output vector is $y(t) = Cx(t - \tau_y)$ with $\text{rank}(C) = n$, we propose the following observer. This architecture incorporates multiple internal delay dynamics and is driven by both delayed output measurements and control input signals
\begin{align}
	\label{a15}
	\dot{\hat{z}}(t)&=N\hat{z}(t)+N_{\tau_u}\hat{z}(t-\tau_u)+N_{\tau_y}\hat{z}(t-\tau_y)
	+G_{\alpha} y(t-\alpha)  \nonumber\\
	&+G_{\tau_u}y(t-\tau_u)+FBu(t-2\tau_u), 
\end{align}
with $\hat{z}(\theta)=\rho(\theta)$  for $\theta\in[-\tau_y,0]$, and $\alpha > 0$ is a parameter to be specified.
The matrices $N$, $N_{\tau_u}$, $N_{\tau_y}$, $G_{\alpha}$ and $G_{\tau_u}$ are to be determined so that
$\hat{z}(t)\to z(t)$ asymptotically. 

Defining the estimation error vector $e(t)=\hat z(t)-Fx(t-\tau_u)$, the error dynamics are given by
\begin{align}
	\label{a16}
	\dot{e}(t)&=\dot{\hat{z}}(t)-F\dot{x}(t-\tau_u)=Ne(t)+N_{\tau_u}e(t-\tau_u)\nonumber\\ \quad &+N_{\tau_y}e(t-\tau_y)+\mathcal{\bar{C}}_{1}x(t-\tau_u)+\mathcal{\tilde{C}}_{1}x(t-\tau_u-\tau_y)\nonumber\\&+N_{\tau_u}Fx(t-2\tau_u)+G_{\alpha}Cx(t-\alpha-\tau_y),
\end{align}
where $\mathcal{\bar{C}}_1 =NF-FA$ and $\mathcal{\tilde{C}}_1 =N_{\tau_y}F+G_{\tau_u}C$.

To simplify (\ref{a16}), we select the parameter $\alpha \geq 0$ to satisfy the following delay matching condition
\begin{align}\label{a17}\alpha + \tau_y = 2\tau_u.\end{align}Applying this condition allows us to consolidate the final two terms of the error dynamics into a single expression, i.e.,
$$N_{\tau_u}Fx(t-2\tau_u) + G_{\alpha}Cx(t-\alpha-\tau_y) = \mathcal{\tilde{C}}_{2}x(t-2\tau_u),$$where $\mathcal{\tilde{C}}_{2} := N_{\tau_u}F + G_{\alpha}C$.

Note that for a valid solution $\alpha \geq 0$ to exist in (\ref{a17}), the measurement delay $\tau_y$ must satisfy the upper bound
$$\tau_y \leq 2\tau_u.$$

The following theorem characterizes sufficient conditions for the existence of observer (\ref{a15}).
\begin{thm}\label{thm:2p3}
	For $\tau_u<\tau_y\leq 2\tau_u$, observer (\ref{a15}) provides
	asymptotic estimation of the functional $z(t)=Fx(t-\tau_u)$ 
	if $\begin{pmatrix} \mathcal {\bar{C}} & \mathcal{\tilde{C}}_{1} &\mathcal{\tilde{C}}_{2} \end{pmatrix}=\bf 0$ and the following delay-dependent error
	dynamics
	\begin{align}
		\label{a18}&\dot{e}(t)=Ne(t)+N_{\tau_u}e(t-\tau_u)+N_{\tau_y}e(t-\tau_y)
	\end{align}
	is asymptotically stable. 
\end{thm}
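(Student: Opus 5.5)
The plan is to derive the error dynamics explicitly, check that they reduce to (\ref{a18}), and then invoke the assumed stability of (\ref{a18}).

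\textbf{Step 1 (rewrite the measurement terms).} Set $e(t)=\hat z(t)-Fx(t-\tau_u)$. Differentiate and use (\ref{a15}) together with the plant (\ref{a1}) evaluated at time $t-\tau_u$:
\[
F\dot x(t-\tau_u)=FAx(t-\tau_u)+FBu(t-2\tau_u).
\]
The term $FBu(t-2\tau_u)$ in (\ref{a15}) cancels this input contribution exactly, so no input term survives. Next, write $y(t-\tau_u)=Cx(t-\tau_u-\tau_y)$ and $y(t-\alpha)=Cx(t-\alpha-\tau_y)$. For each delayed estimate substitute $\hat z(t-s)=e(t-s)+Fx(t-\tau_u-s)$, with $s=0,\tau_u,\tau_y$. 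Collecting terms reproduces (\ref{a16}): the coefficient of $x(t-\tau_u)$ is $NF-FA=\bar{\mathcal C}_1$, and the coefficient of $x(t-\tau_u-\tau_y)$ is $N_{\tau_y}F+G_{\tau_u}C=\tilde{\mathcal C}_1$.

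\textbf{Step 2 (merge the remaining delayed terms).} For $\tau_u<\tau_y\le 2\tau_u$, choose $\alpha=2\tau_u-\tau_y\ge 0$ as in (\ref{a17}). Then $x(t-\alpha-\tau_y)=x(t-2\tau_u)$, so the last two terms of (\ref{a16}) combine into $\tilde{\mathcal C}_2x(t-2\tau_u)$. The resulting error equation is
\[
\dot e(t)=Ne(t)+N_{\tau_u}e(t-\tau_u)+N_{\tau_y}e(t-\tau_y)+\bar{\mathcal C}_1x(t-\tau_u)+\tilde{\mathcal C}_1x(t-\tau_u-\tau_y)+\tilde{\mathcal C}_2x(t-2\tau_u).
\]
Here $\bar{\mathcal C}$ refers to the coefficient $\bar{\mathcal C}_1$ appearing in this equation. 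Under the hypothesis that $\bar{\mathcal C}_1$, $\tilde{\mathcal C}_1$ and $\tilde{\mathcal C}_2$ all vanish, every $x$-dependent term drops out, and the error obeys the autonomous equation (\ref{a18}). This holds regardless of $x(\cdot)$ and $u(\cdot)$, and hence also for unstable plant trajectories.

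\textbf{Step 3 (conclude convergence).} The error $e$ has an initial function on $[-\tau_y,0]$ determined by $\rho$ and the plant history. Since (\ref{a18}) is asymptotically stable by hypothesis, and it is a linear time-invariant retarded system, its characteristic roots $\det(\lambda I-N-N_{\tau_u}e^{-\lambda\tau_u}-N_{\tau_y}e^{-\lambda\tau_y})=0$ lie in an open left half-plane bounded away from the imaginary axis. Therefore $\|e(t)\|\le \kappa e^{-\beta t}\sup_{\theta\in[-\tau_y,0]}\|e(\theta)\|$ for some $\kappa,\beta>0$, and so $\hat z(t)\to Fx(t-\tau_u)$.

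\textbf{Main obstacle.} The only delicate point is bookkeeping rather than analysis. One must track each shifted argument correctly, in particular that $\hat z(t-\tau_y)$ produces $x(t-\tau_u-\tau_y)$, which is the same argument as $y(t-\tau_u)$. One must also confirm that the constraint $\tau_y\le 2\tau_u$ is exactly what makes $\alpha\ge0$, so that the observer is causal. The stability step itself is immediate from the hypothesis.
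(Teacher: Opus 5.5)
Your proposal is correct and follows the paper's argument. With $\alpha$ fixed by (\ref{a17}), the vanishing of $\bar{\mathcal C}_1$, $\tilde{\mathcal C}_1$ and $\tilde{\mathcal C}_2$ collapses the error equation (\ref{a16}) to the autonomous system (\ref{a18}), and the assumed asymptotic stability then gives $e(t)\to\mathbf 0$. The only differences are that you re-derive (\ref{a16}) and the delay-matching step explicitly, which the paper does in the text preceding the theorem, and that you add the standard exponential bound for retarded LTI systems, which is harmless but not needed.
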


\textit{Proof:} If $\begin{pmatrix} \mathcal {\bar{C}} & \mathcal{\tilde{C}}_{1} &\mathcal{\tilde{C}}_{2} \end{pmatrix}=\bf 0$, then \eqref{a16} reduces to \eqref{a18}, so the error dynamics are
decoupled from $x(\cdot)$ and $u(\cdot)$. If, in addition, \eqref{a18} is asymptotically stable, then $e(t)\to \bf 0$ as $t\to\infty$ for all admissible initial conditions and inputs $u(\cdot)$. This completes the proof.

According to \cite{trinhnam26}, the constraint $\mathcal{\bar{C}}_{1}=\mathbf{0}$ is satisfied if and only if
$$\text{rank} \begin{pmatrix} FA \\ F \end{pmatrix} = \text{rank}(F).$$Under this specific condition, the matrix $N$ is defined as $N=FAF^-$.

Given that the matrix $C$ has full rank ($\text{rank}(C)=n$), the constraints $\mathcal{\tilde{C}}_{1}=\mathcal{\tilde{C}}_{2}=\mathbf{0}$ are satisfied if the following conditions hold
\begin{align}
	\label{a19}
	G_{\tau_u}&=-N_{\tau_y}FC^{-},\\
	\label{a20}
	G_{\alpha}&=-N_{\tau_u}FC^{-}.
\end{align}

Thus, for given values of $\tau_u$ and $\tau_y$ satisfying the condition $\tau_u < \tau_y \leq 2\tau_u$, Lemma 13 in \cite{trinhnam26} can be employed to determine $N_{\tau_u}$ and $N_{\tau_y}$ such that system (\ref{a18}) is asymptotically stable. A sufficient condition for this stability is the feasibility of the LMI presented in Lemma 13 of \cite{trinhnam26}.

We now revisit Example 2 to design an observer-based controller that stabilizes the system for the case where $\tau_u = 0.75\text{s}$, $\tau_y = 1.45\text{s}$, and $C = I_2$. Under these conditions, $C = I_2$ allows us to utilize the observer (\ref{a15}) to estimate the control law$$z(t) = u(t - 0.75) = Fx(t - 0.75)$$where the gain matrix$$F = \begin{pmatrix} 0.8121 & -0.8469 \end{pmatrix}$$is obtained by applying Lemma 11 \cite{trinhnam26} (with $\tau_u = 0.75\text{s}$ and $\lambda = 1$).

Note that while the input time-delay here is significantly larger than the $0.5\text{s}$ delay in Example 2, it remains below the threshold $\bar{\tau}_u = 0.95\text{s}$. Consequently, the feasibility of the LMI in Lemma 11 \cite{trinhnam26} is still guaranteed.

Conversely, because the measurement delay exceeds the input delay ($\tau_y = 1.45\text{s} > 0.75\text{s}$), we cannot directly obtain $z(t)=Fx(t-0.75)$ from the heavily delayed state measurement vector $x(t-1.45)$, despite having $C=I_2$. Furthermore, since $\tau_y = 1.45\text{s}$ exceeds $\bar{\tau}_y=1.26\text{s}$, the observer (\ref{a11})-(\ref{a12}) cannot be used to implement the control law. To overcome these limitations, we utilize the more general observer (\ref{a15}) as a time-delay compensator. This structure leverages the heavily delayed measurements to reconstruct a more recent, less-delayed functional estimate.

With $\tau_u=0.75\text{s}$ and $\tau_y=1.45\text{s}$ satisfying $\tau_y < 2\tau_u$, it follows that $\alpha >0$. Evaluating equation (\ref{a17}) yields $\alpha = 0.05$. 

As in Example 2, the rank condition is violated for $F = \begin{pmatrix} 0.8121 & -0.8469 \end{pmatrix}$ and $A = \begin{pmatrix} 0 & 1 \\ -1 & 1 \end{pmatrix}$ since $\text{rank} \begin{pmatrix} FA \\ F \end{pmatrix} > \text{rank}(F)$. Therefore, we insert an additional row by choosing $R = \begin{pmatrix} 0 & 1 \end{pmatrix}$ to satisfy the rank condition. With $\bar{F}=\begin{pmatrix} F\\R \end{pmatrix}=\begin{pmatrix}0.8121 & -0.8469\\0 &1 \end{pmatrix}$, we obtain \[N=\bar{F}A\bar{F}^-=\begin{pmatrix}1.0429 &   0.8484\\	-1.2314  & -0.0429 \end{pmatrix}.\]
Now, with $\tau_u=0.75\text{s}$ and $\tau_y=1.45\text{s}$ the LMI in Lemma 13  \cite{trinhnam26} is feasible for $\lambda = 1$, and we obtain $N_{\tau_u}$ and $N_{\tau_y}$ which ensure the asymptotic stability of (\ref{a18}), where
\[N_{\tau_u}=\begin{pmatrix} -0.8705 &-0.3991\\0.9424 &  -0.4705\end{pmatrix}, \] \[N_{\tau_y}=\begin{pmatrix} -0.0650 &-0.0002\\-0.1714 &  -0.0400 \end{pmatrix}.
\]
Figure \ref{fig3paper3} shows the trajectories $e_1(t)$ and  $e_2(t)$ of the error time-delay system (\ref{a18}), where it is clear that both converge asymptotically to zero.
\begin{figure}[!h]
	\centering
	\includegraphics[width=0.9\linewidth]{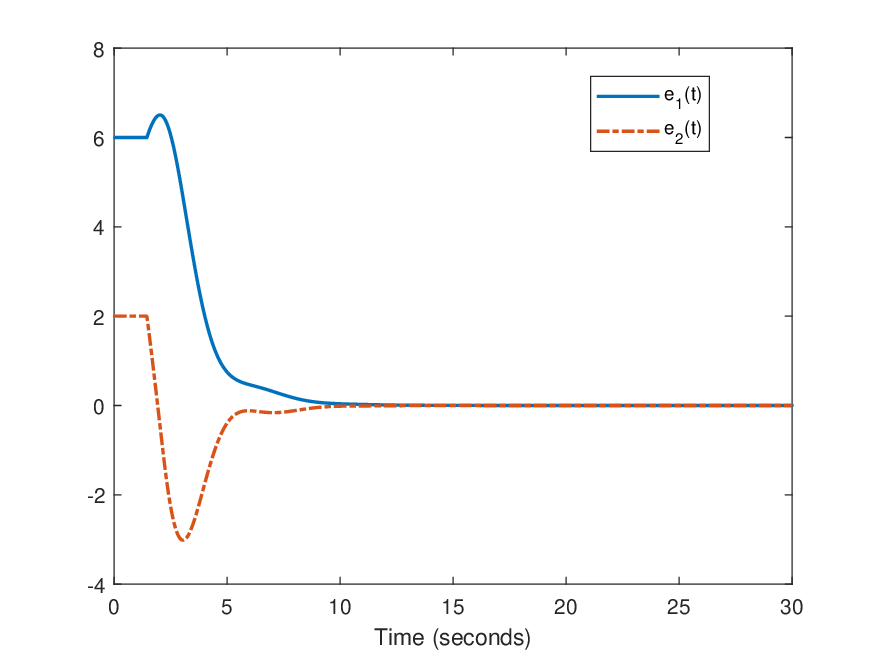}
	\caption{Trajectories of $e_1(t)$ and $e_2(t)$}
	\label{fig3paper3}
\end{figure}

We obtain $\bar{F}B=\begin{pmatrix} -0.8469\\
		1\end{pmatrix}$, and from (\ref{a19})-(\ref{a20}), the observer gains
\[G_{\tau_u}=\begin{pmatrix} 0.0528 &  -0.0548\\0.1392 &   -0.1051\end{pmatrix},  G_{\alpha}=\begin{pmatrix} 0.7070 &-0.3382\\-0.7654 &   1.2687 \end{pmatrix}.
\]
Finally, the functional $z(t)=u(t-0.75)=Fx(t-0.75)$ can now be estimated according to the following equations
$$\hat{z}(t) = \begin{pmatrix} 1 & 0\end{pmatrix} \hat{z}_{\text{aug}}(t),$$
\begin{align}
	\dot{\hat{z}}_{\text{aug}}(t)&=\begin{pmatrix}1.0429 &   0.8484\\	-1.2314  & -0.0429 \end{pmatrix}\hat{z}_{\text{aug}}(t)\nonumber\\&+\begin{pmatrix} -0.8705 &-0.3991\\0.9424 &  -0.4705\end{pmatrix}\hat{z}_{\text{aug}}(t-0.75) \nonumber\\&+ \begin{pmatrix} -0.0650 &-0.0002\\-0.1714 &  -0.0400 \end{pmatrix}\hat{z}_{\text{aug}}(t-1.45)\nonumber\\&+\begin{pmatrix} 0.7070 &-0.3382\\-0.7654 &   1.2687 \end{pmatrix}y(t-0.05)\nonumber\\&+\begin{pmatrix} 0.0528 &  -0.0548\\0.1392 &   -0.1051\end{pmatrix}y(t-0.75)\nonumber\\&+\begin{pmatrix} -0.8469\\
		1\end{pmatrix}u(t-1.5).\nonumber
\end{align}
To illustrate the validity of the observer-based control scheme which combines the controller and observer derived above, Figure \ref{fig3apaper3} shows the trajectories of $x_1(t)$ and  $x_2(t)$. It is clear
that asymptotic stability of the closed-loop system has been achieved.
\begin{figure}[!h]
	\centering
	\includegraphics[width=0.9\linewidth]{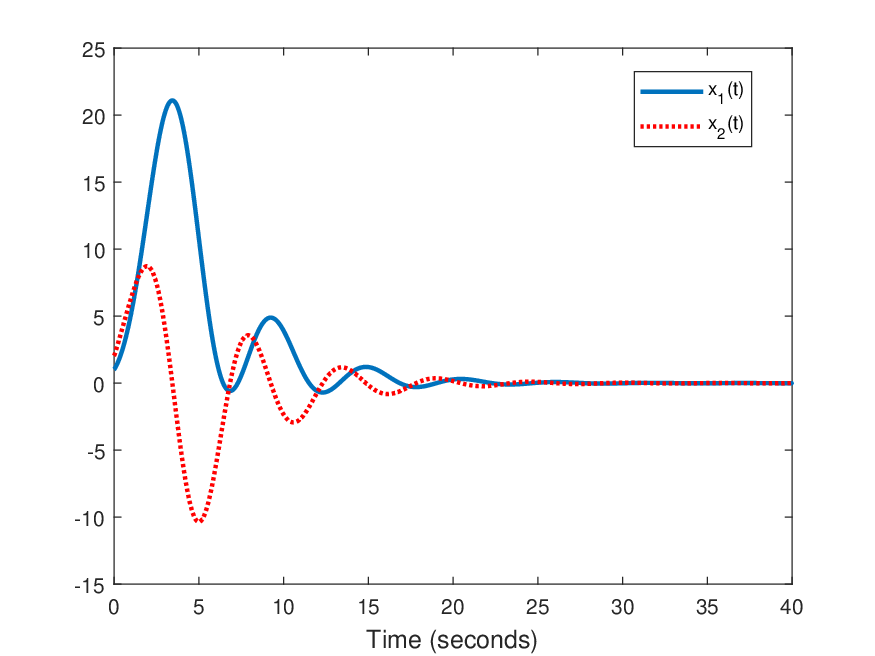}
	\caption{Trajectories of $x_1(t)$ and $x_2(t)$}
	\label{fig3apaper3}
\end{figure}

\textit{Remark 5:} Notably, the LMI in Lemma 13 \cite{trinhnam26} remains feasible for values of $\tau_y$ significantly exceeding $1.45\text{s}$, even when $\tau_u > \bar{\tau}_u$. However, given the constraint (\ref{a17}) (where $\tau_y \leq 2\tau_u$), observer (\ref{a15}) is most effective for larger delays satisfying $\tau_y \geq \bar{\tau}_y$, and $\tau_u$ is closer to the limit $ \bar{\tau}_u$ as illustrated by the case where $\tau_u = 0.75\text{s}$ and $\tau_y = 1.45\text{s}$.

Conversely, because the LMI condition in Lemma 11 \cite{trinhnam26} is restricted to input delays within $\bar{\tau}_u$, it cannot provide a stabilizing controller gain for any $\tau_u > \bar{\tau}_u$. To increase the allowable upper bound $\bar{\tau}_u$, one may utilize a state feedback control law incorporating multiple delays, such as
\begin{align}\label{a21}u(t-\tau_u)=Fx(t-\tau_u)+F_hx(t-h),\end{align}
where $h>\tau_u$ and $F_h\in \mathbb{R}^{r\times n}$. Applying the control law (\ref{a21}) to the system (\ref{a1}) yields the following closed-loop time-delay system
\begin{align}\label{a22}\dot{x}(t)=Ax(t)+BFx(t-\tau_u) + BF_hx(t-h).\end{align}
The objective is to ensure this closed-loop system is asymptotically stable. As demonstrated in Section V of \cite{trinhnam26}, system (\ref{a22}) maintains stability under a larger delay $\tau_u$ compared to (\ref{a4}). By applying Lemma 13 from the same work, the gain matrices $F$ and $F_h$ can be determined; specifically, the feasibility of the LMI in Lemma 13 \cite{trinhnam26} provides a sufficient condition to ensure the asymptotic stability of (\ref{a22}).

Finally, the control law (\ref{a21}) is implemented by designing a generalized functional observer that utilizes the delayed measurement vector $y(t)=Cx(t-\tau_y)$ to directly estimate the delayed generalized functional
$$z(t)=Fx(t-\tau_u)+F_hx(t-h).$$The reader is encouraged to explore this further; in particular, the design methodologies recently reported in \cite{trinhvan26}, which discuss generalized functional observers, can be adopted to address this problem.

\subsection{Target output controller design for linear systems with state and output delays}\label{asc}
In many control applications, regulating or tracking specific linear combinations of system states or outputs--referred to as target outputs--is often preferred over controlling the entire state vector. Motivated by recent findings in \cite{Fernando2025}, this section investigates the design of target output controllers for linear systems subject to time delays in both the input and output vectors. Specifically, for system (\ref{a1}), we  consider the design and implementation of the target output controller\begin{equation}\label{a23}u(t-\tau_u)=Z_{\tau_u}z_o(t-\tau_u),\end{equation}
where $z_o(t)=F_ox(t)\in \mathbb{R}^m$ represents the target output vector, $F_o\in \mathbb{R}^{m\times n}$ is full row rank, and $Z_{\tau_u}\in \mathbb{R}^{r\times m}$ denotes the controller gain.

The primary objective of control law (\ref{a23}) is to drive the target output vector to the origin ($z_o(t) \to \mathbf{0}$ as $t \to \infty$) from any initial condition $F_ox(0)$. The design of the controller is based on a reduced-order
system \cite{Fernando2025} obtained by projecting the full state dynamics onto the
row space of the target output matrix $F_o$.

In practice, however, neither $z_o(t)$ nor its delayed counterpart $z_o(t-\tau_u)$ is directly available for feedback. Additionally, the measured output vector is subject to the time delay modeled in (\ref{a2}). To overcome these limitations, the controller (\ref{a23}) is implemented using a functional observer designed to directly estimate either the delayed target output vector $z_o(t-\tau_u)$ or the delayed control input vector $u(t-\tau_u)$ (the exact formulation is detailed later). By applying this design methodology, this work extends the results of \cite{Fernando2025} to accommodate mismatched time delays occurring simultaneously in both the control input and system output vectors.

We will use the following lemmas \cite{Fernando2025} in the sequel.
\begin{lemma}[\cite{Fernando2025}]\label{lemma1p3}
	The following statements are equivalent:
	\begin{itemize}
		\item[i)] $\mathrm{rank}\begin{pmatrix}
			F_oA\\F_o
		\end{pmatrix} =\mathrm{rank}(F_o)$,
		\item[ii)] $	F_oA(I-F^{-}_oF_o) = \mathbf{0}.$  
	\end{itemize}
\end{lemma}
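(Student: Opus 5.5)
The plan is to reduce both statements to the same fact: the row space of $F_oA$ is contained in the row space of $F_o$. Throughout, $F_o^{-}$ denotes a generalized inverse of $F_o$, so that $F_oF_o^{-}F_o=F_o$. Since $F_o$ has full row rank, one may also take $F_oF_o^{-}=I_m$, but only the weaker identity is needed.

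First I will show that i) is equivalent to
\begin{equation*}
\exists\, L\in\mathbb{R}^{m\times m}\ \text{ such that }\ F_oA=LF_o .
\end{equation*}
The row space of $\begin{pmatrix}F_oA\\F_o\end{pmatrix}$ is the sum of the row spaces of $F_oA$ and of $F_o$, and it always contains the row space of $F_o$. Hence the two ranks coincide exactly when this sum equals the row space of $F_o$. That happens precisely when every row of $F_oA$ is a linear combination of the rows of $F_o$, i.e.\ when $F_oA=LF_o$ for some $L$.

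For i) $\Rightarrow$ ii), I substitute $F_oA=LF_o$ and use the generalized-inverse identity:
\begin{equation*}
F_oA(I-F_o^{-}F_o)=L\,(F_o-F_oF_o^{-}F_o)=\mathbf{0}.
\end{equation*}
For ii) $\Rightarrow$ i), the identity in ii) rearranges to $F_oA=(F_oAF_o^{-})F_o$. This exhibits the required $L=F_oAF_o^{-}$, so the rank equality in i) follows from the first step. As a byproduct, this identifies the reduced-order matrix $F_oAF_o^{-}$, consistent with the choice $N=FAF^{-}$ used earlier in the paper.

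There is no real obstacle here. The only step needing care is the rank-to-row-space equivalence, which is a standard fact, together with stating which property of $F_o^{-}$ is used. The argument only requires $F_oF_o^{-}F_o=F_o$, so it holds for any generalized inverse and not just the Moore--Penrose one.
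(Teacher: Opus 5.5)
Your proof is correct and complete. The paper does not prove this lemma; it imports it verbatim from \cite{Fernando2025}, so there is no in-paper argument to compare against.

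Your reduction of both statements to the row-space inclusion $\mathrm{row}(F_oA)\subseteq\mathrm{row}(F_o)$, i.e.\ $F_oA=LF_o$ for some $L$, is the standard argument. It gives slightly more than the statement asks for:
\begin{itemize}
\item It uses only $F_oF_o^{-}F_o=F_o$. The equivalence therefore holds for any generalized inverse and does not need the full-row-rank assumption on $F_o$.
\item Its first step, together with your choice $L=F_oAF_o^{-}$, is exactly the solvability part of Lemma~\ref{lemma2p3}. That part says $N_oF_o=F_oA$ is solvable if and only if the rank condition holds, with $N_o=F_oAF_o^{-}$ a solution. The paper also only cites this result.
\end{itemize}

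One small remark on your aside. Since $F_o$ has full row rank, \emph{every} generalized inverse satisfies $F_oF_o^{-}=I_m$: multiply $F_oF_o^{-}F_o=F_o$ on the right by a right inverse of $F_o$. So this identity is automatic rather than a choice, although, as you note, your argument does not need it.
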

\begin{lemma}[\cite{Fernando2025}]\label{lemma2p3}
	The following equation
	\begin{equation} 
		N_oF_o-F_oA=\bf 0, \nonumber 
	\end{equation} 
	where $A\in \mathbb{R}^{n\times n}$, $F_o\in \mathbb{R}^{m\times n}$, $\mathrm{rank}(F_o)=m$ and $m \leq n$ are known matrices and 
	$N_o \in \mathbb{R}^{m\times m}$ is an unknown matrix, has a solution
	if and only if
	\begin{equation} 
		\mathrm{rank}\begin{pmatrix}
			F_oA\\F_o
		\end{pmatrix} =\mathrm{rank}(F_o), \nonumber 
	\end{equation}
	and in this situation $N_o=F_oAF^-_o$ satisfies
	\begin{equation}
		\sigma(N_o) \subseteq \sigma(A). \nonumber 
	\end{equation}
\end{lemma}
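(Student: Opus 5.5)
The plan is to prove Lemma~\ref{lemma2p3} directly with linear algebra. Throughout, $F_o^-$ denotes a right inverse of the full-row-rank matrix $F_o$, for instance $F_o^-=F_o^{\mathsf T}(F_oF_o^{\mathsf T})^{-1}$, so that $F_oF_o^-=I_m$. Lemma~\ref{lemma1p3} will be invoked to convert the rank condition into the projection identity $F_oA(I-F_o^-F_o)=\mathbf 0$.

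\emph{Necessity.} Suppose some $N_o$ satisfies $N_oF_o=F_oA$. Then every row of $F_oA$ is a linear combination of the rows of $F_o$. Hence the row space of $\begin{pmatrix}F_oA\\F_o\end{pmatrix}$ equals that of $F_o$, which gives $\mathrm{rank}\begin{pmatrix}F_oA\\F_o\end{pmatrix}=\mathrm{rank}(F_o)$.

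\emph{Sufficiency.} Assume the rank condition holds. By Lemma~\ref{lemma1p3} it is equivalent to $F_oA=F_oAF_o^-F_o$. Setting $N_o:=F_oAF_o^-$ then gives $N_oF_o=F_oAF_o^-F_o=F_oA$, so $N_o$ is a solution. The same argument also shows uniqueness: right-multiplying $N_oF_o=F_oA$ by $F_o^-$ forces $N_o=F_oAF_o^-$. Consequently the particular choice of right inverse does not affect $N_o$.

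\emph{Spectral inclusion.} Let $\mu\in\sigma(N_o)$, and take a left eigenvector $v\neq 0$ with $v^{*}N_o=\mu v^{*}$. Multiplying $N_oF_o=F_oA$ on the left by $v^{*}$ yields $(v^{*}F_o)A=\mu(v^{*}F_o)$. Because $F_o$ has full row rank, $v^{*}F_o\neq 0$. Thus $v^{*}F_o$ is a left eigenvector of $A$ with eigenvalue $\mu$, and so $\mu\in\sigma(A)$.

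The only point requiring care is the nonvanishing of $v^{*}F_o$ in the last step, which is exactly where full row rank of $F_o$ is used. An equivalent alternative is to complete $F_o$ to a nonsingular matrix $T=\begin{pmatrix}F_o\\ S\end{pmatrix}$. In these coordinates $TAT^{-1}$ is block lower triangular with $N_o$ as a diagonal block, which again gives $\sigma(N_o)\subseteq\sigma(A)$.
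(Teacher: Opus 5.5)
Your proof is correct and complete. Necessity follows from the row-space inclusion. Existence and uniqueness of $N_o=F_oAF_o^-$ follow from $F_oF_o^-=I_m$, which also shows that the choice of right inverse is immaterial. The spectral inclusion follows from the left eigenvector $v^{*}F_o\neq 0$, which is exactly where full row rank is needed.

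The paper gives no proof of its own to compare with: it quotes this lemma from \cite{Fernando2025}. The only comparison worth making is with how the lemma is used later. In the proof of Theorem~\ref{thm:3p3}, the paper applies it to the $\lambda$-dependent pair $N_o+B_oZ_{\tau_u}e^{-\lambda\tau_u}$ and $A+BZ_{\tau_u}F_oe^{-\lambda\tau_u}$. There $\sigma(\cdot)$ means the roots of a characteristic quasi-polynomial, which the constant-real-matrix statement does not literally cover.

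Both of your arguments transfer verbatim, because the intertwining identity $(N_o+B_oZ_{\tau_u}e^{-\lambda\tau_u})F_o=F_o(A+BZ_{\tau_u}F_oe^{-\lambda\tau_u})$ holds for each fixed $\lambda\in\mathbb{C}$.
\begin{itemize}
\item The left-null-vector argument gives the set inclusion pointwise in $\lambda$.
\item Your block-triangular variant gives more. Write $T^{-1}=\begin{pmatrix}P & Q\end{pmatrix}$, so that $F_oQ=\mathbf 0$. Then the $(1,2)$ block of $T(A+BZ_{\tau_u}F_oe^{-\lambda\tau_u})T^{-1}$ vanishes, and its $(2,2)$ block is the delay-free matrix $SAQ$. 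This yields the factorization $\det(\lambda I-A-BZ_{\tau_u}F_oe^{-\lambda\tau_u})=\det(\lambda I-N_o-B_oZ_{\tau_u}e^{-\lambda\tau_u})\,\det(\lambda I-SAQ)$.
\end{itemize}
Since $TAT^{-1}$ has diagonal blocks $N_o$ and $SAQ$, the second factor's roots are exactly the eigenvalues of $A$ not carried by $N_o$, counted with multiplicity. This explains the numerical observation in Example 3 that the extra closed-loop roots are $0.5$, $0.2$, $-1$.

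So the eigenvector route is the shortest proof of the lemma as stated. The coordinate-change route additionally yields multiplicities and the complementary, controller-independent part of the full-order spectrum.
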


Following the methodology presented by Fernando and Darouach \cite{Fernando2025}, we derive the existence conditions for controller (23) by focusing on a subsystem of order $m$, where $m$ represents the number of target functionals to be controlled.

Pre-multiplying \eqref{a1} by $F_o$, we obtain
\begin{equation} 
	F_o\dot{x}(t) = F_oAx(t) + F_oBu(t-\tau_u) \nonumber
\end{equation}	
which can be expressed as
\begin{align}
	&\dot{z}_o(t)= F_oA(I-F^-_oF_o+F^-_oF_o)x(t) +F_oBu(t-\tau_u) \nonumber\\
	&=N_oz_o(t) + F_oA(I-F^-_oF_o)x(t)+B_ou(t-\tau_u), \label{a24}
\end{align}
where $B_o:=F_oB$ and $N_o:=F_oAF^-_o$.
\begin{thm}\label{thm:3p3}
	The control law $u(t-\tau_u)=Z_{\tau_u}z_o(t-\tau_u), Z_{\tau_u}\in \mathbb{R}^{r\times m}$, can drive $z_o(t) \rightarrow \bf{0}$ as $t \rightarrow \infty$ from any given initial 
	condition $F_ox(0)$
	by stabilizing a time-delay system of order $m$ if and only if the following conditions are satisfied,
	\begin{align}
		\label{a25}
		&\mathrm{rank}\begin{pmatrix}
			F_oA\\F_o
		\end{pmatrix} =\mathrm{rank}(F_o),\\
		&
		\sigma\Big(N_o + B_oZ_{\tau_u}e^{-\lambda \tau_u}\Big) \subset \mathbb{C}_- \label{a26}
	\end{align}
	where $\mathbb{C}_- := \{ \lambda \in \mathbb{C} \mid \operatorname{Re}(\lambda) < 0\}$.
\end{thm}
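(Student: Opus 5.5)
Both directions will be read off the projected dynamics (\ref{a24}). Substituting the control law gives
\begin{align*}
\dot z_o(t)=N_oz_o(t)+B_oZ_{\tau_u}z_o(t-\tau_u)+F_oA(I-F_o^-F_o)x(t).
\end{align*}
This is a closed time-delay system of order $m$ in $z_o$ alone exactly when the coupling term $F_oA(I-F_o^-F_o)x(t)$ vanishes for every trajectory. Once that holds, the order-$m$ system is stable exactly when its spectrum lies in $\mathbb{C}_-$.

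\emph{Sufficiency.} Assume (\ref{a25}) and (\ref{a26}).
\begin{itemize}
\item By Lemma~\ref{lemma1p3}, (\ref{a25}) is equivalent to $F_oA(I-F_o^-F_o)=\mathbf 0$, so the coupling term drops out identically.
\item Equivalently, by Lemma~\ref{lemma2p3}, $N_oF_o=F_oA$ with $N_o=F_oAF_o^-$.
\item Hence $z_o$ satisfies $\dot z_o(t)=N_oz_o(t)+B_oZ_{\tau_u}z_o(t-\tau_u)$. This is a retarded linear time-delay system whose spectrum is $\sigma(N_o+B_oZ_{\tau_u}e^{-\lambda\tau_u})$ in the sense defined in Section~\ref{asa}.
\item Condition (\ref{a26}) places every characteristic root in $\mathbb{C}_-$. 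For a linear retarded system with a single delay, the roots have real parts bounded above and only finitely many lie in any right half-plane strip. So a strictly negative spectrum gives exponential stability, and $z_o(t)\to\mathbf 0$ for every initial function, in particular from any $F_ox(0)$ and its history.
\end{itemize}

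\emph{Necessity.} Suppose the law drives $z_o\to\mathbf 0$ from any initial condition $F_ox(0)$ by stabilizing an order-$m$ time-delay system. I read this as requiring the $z_o$-dynamics to be self-contained, i.e.\ independent of the remaining state components.
\begin{itemize}
\item Since $x(t)$ can be chosen freely at a given instant, with $z_o$ fixed and the component in $\ker F_o$ arbitrary, closure of (\ref{a24}) forces $F_oA(I-F_o^-F_o)=\mathbf 0$. Indeed $(I-F_o^-F_o)x$ ranges over all of $\ker F_o$.
\item By Lemma~\ref{lemma1p3}, this vanishing is exactly (\ref{a25}).
\item The resulting order-$m$ system must then be asymptotically stable for all initial functions. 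If some characteristic root $\lambda_0$ had $\mathrm{Re}\,\lambda_0\ge 0$, the exponential solution $z_o(t)=e^{\lambda_0 t}v$, with $v$ in the kernel of $\lambda_0 I-N_o-B_oZ_{\tau_u}e^{-\lambda_0\tau_u}$, would be an admissible trajectory that does not decay. Its real part gives a real solution, so this contradicts convergence and yields (\ref{a26}).
\end{itemize}

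\emph{Main obstacle.} I expect the difficulty to lie in the necessity direction: making precise what ``by stabilizing a time-delay system of order $m$'' means, so that decoupling is genuinely forced. Without that clause, $z_o$ could converge thanks to stable dynamics in the complementary coordinates even when (\ref{a25}) fails. I would therefore state explicitly that the phrase is interpreted as requiring closed $z_o$-dynamics of the form (\ref{a24}) with the coupling term absent, following \cite{Fernando2025}. Necessity then reduces to Lemma~\ref{lemma1p3} together with the exponential-solution argument above.
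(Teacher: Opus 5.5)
Your proposal is correct and follows the paper's proof. Sufficiency goes through Lemma~\ref{lemma1p3}, which kills the coupling term in \eqref{a24} and yields the closed order-$m$ system \eqref{a28}, and necessity of \eqref{a26} is shown by contrapositive: a characteristic root in the closed right half-plane prevents $z_o(t)\to\mathbf{0}$. You are somewhat more complete than the paper, whose necessity step simply assumes \eqref{a25} holds; you derive \eqref{a25} from the decoupling requirement using the freedom of the $\ker F_o$ component of $x(0)$, and you make the non-decay explicit through an exponential solution.
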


\textit{Proof:} From Lemma \ref{lemma1p3}, condition \eqref{a25} is equivalent to
\[F_oA(I-F^-_oF_o)=\mathbf{0}.\] We first prove sufficiency. If \eqref{a25} holds, then from the projected dynamics in \eqref{a24}, we obtain the following reduced-order subsystem of dimension $m$
\begin{align}
	\label{a27}
	\dot{z}_o(t)= N_oz_o(t) +B_ou(t-\tau_u).
\end{align}
Substituting the control law $u(t-\tau_u)=Z_{\tau_u}z_o(t-\tau_u)$ into ({\ref{a27}) gives the closed-loop time-delay system
	\begin{align}
		\label{a28}
		\dot{z}_o(t)= N_oz_o(t) +B_oZ_{\tau_u}z_o(t-\tau_u).
	\end{align}
	If condition \eqref{a26} holds, then the time-delay system (\ref{a28}) is asymptotically stable. Hence, $z_o(t) \rightarrow \mathbf{0}$ as $t \rightarrow \infty$, and the target output is stabilized via a subsystem of order $m$.
	
	Following Fernando and Darouach \cite{Fernando2025}, we establish necessity by proving the contrapositive. Suppose that \eqref{a25} holds, but condition \eqref{a26} is violated. It follows that the time-delay system \eqref{a28} possesses at least one eigenvalue in the closed right half of the complex plane. Consequently, the system \eqref{a28} is not asymptotically stable, which implies that $z_o(t) \not\to \mathbf{0}$ as $t \to \infty$. This completes the proof of necessity, thereby concluding the proof of Theorem \ref{thm:3p3}.
	
	As detailed in Section \ref{asa}, the objective is to determine $Z_{\tau_u}$ such that the asymptotic stability of \eqref{a28} is guaranteed. For a given time-delay $\tau_u$, a sufficient condition for the asymptotic stability of \eqref{a28} is the feasibility of the LMI defined in Lemma 11 \cite{trinhnam26}. Hence, $Z_{\tau_u}$ is obtained which ensures that $\sigma\Big(N_o + B_oZ_{\tau_u}e^{-\lambda \tau_u}\Big) \subset \mathbb{C}_-$.
	
	Applying the control law (\ref{a23}) to the full-order system (\ref{a1}) yields the closed-loop dynamics$$\dot{x}(t)=Ax(t)+BZ_{\tau_u}F_ox(t-\tau_u),$$whose spectrum is given by$$\sigma\Big(A + BZ_{\tau_u}F_oe^{-\lambda \tau_u}\Big).$$
	
	Subject to the satisfaction of (\ref{a25}), the matrix equation $N_oF_o-F_oA=\bf 0$ can be rewritten as follows
	\[(N_o+B_oZ_{\tau_u}e^{-\lambda \tau_u})F_o-F_o(A + BZ_{\tau_u}F_oe^{-\lambda \tau_u})=\bf 0.\]
	Since $F_o\neq \bf 0$ is full row rank, from Lemma \ref{lemma2p3}, it follows that \[
	\sigma\left( N_o+B_oZ_{\tau_u}e^{-\lambda \tau_u} \right)
	\subseteq
	\sigma\left( A + BZ_{\tau_u}F_oe^{-\lambda \tau_u} \right).
	\]

By demonstrating that the spectral properties of the reduced-order closed-loop system are embedded within the full-order closed-loop system, we establish a direct mapping of their stability characteristics. This justifies using the reduced-order model (\ref{a27}) as a valid framework for designing the target output controller.

As noted in Remark 3, when condition \eqref{a25} is violated, one can construct an augmented matrix $\bar{F}_o$ of the form
$$ \bar{F}_o = \begin{pmatrix} F_o \\ R \end{pmatrix} $$by choosing a matrix $R$ such that substituting $\bar{F}_o$ for $F_o$ in \eqref{a25} satisfies the condition. The modified controller then regulates the augmented target output vector
\[\bar{z}_o(t)=\begin{pmatrix} z_o(t)\\z_{a}(t)\end{pmatrix}=\begin{pmatrix} F_o\\R\end{pmatrix} x(t)=\bar{F}_ox(t).\]

Let us pause here for an illustrative example.

\textit{Example 3:} We adopt Example 3 from \cite{Fernando2025} to analyze a control input vector subject to a time-delay $\tau_u$, denoted as $u(t-\tau_u)$. The system matrices are given by

$A=\begin{pmatrix} 
	1 &0.5 &-1 &0 &1\\0.3 &0.5 &-0.6 &-0.3 &0.3\\-0.6 &0 &0.2 &0.6 &-0.6\\1.25 &0.5 &-1 &-0.25 &1.75\\-0.75 &0 &0 &0.75 &-0.25 \end{pmatrix}$,

$B=\begin{pmatrix}1 &-1\\1 &1\\0 &0\\1 &0\\0 &1 \end{pmatrix}$ and $F_o=\begin{pmatrix}0.5 &1 &-2 &0.5 &2.5 \end{pmatrix}$.

While the original $F_o$ violates condition (\ref{a25}), choosing $R=F_oA=\begin{pmatrix}0.75 &1 &-2 &0.25 &2.25\end{pmatrix}$ as suggested in \cite{Fernando2025} yields an augmented matrix $\bar{F}_o = \begin{pmatrix} F_o \\ R \end{pmatrix}$ that successfully satisfies the condition.

Thus, we obtain the following reduced-order system
\[\bar{z}_o(t)=\bar{N}_o\bar{z}_o(t)+\bar{B}_ou(t-\tau_u),\]
where $\bar{N}_o:=\bar{F}_oA\bar{F}^-_o=\begin{pmatrix}0 &1\\-0.5 &1.5\end{pmatrix}$ and $\bar{B}_o:=\bar{F}_oB=\begin{pmatrix}2 &3\\2 &2.5\end{pmatrix}$.

Now, for a given $\tau_u$, say, $\tau_u=0.5\text{s}$, the LMI in Lemma 11 \cite{trinhnam26} is feasible for $\lambda=1$, and we obtain
\[\bar{Z}_{\tau_u}=\begin{pmatrix} 2.8927 &-4.9739\\-2.1246 &3.2658 \end{pmatrix}.\]
Consequently, the target output controller 
\[u(t-0.5)=\bar{Z}_{\tau_u}\bar{z}_o(t-0.5),\]
guarantees the asymptotic stability of the following augmented target output vector
\begin{align}
	\dot{\bar{z}}_o(t)&=\begin{pmatrix}0 &1\\-0.5 &1.5\end{pmatrix}\bar{z}_o(t)\nonumber\\&+\begin{pmatrix}-0.5885 &-0.1504\\0.4738 &-1.7833 \end{pmatrix}\bar{z}_o(t-0.5).\nonumber\end{align}
For the above time-delay system, the eigenvalues located to the right of the vertical line $\alpha = -4$ are calculated as $\{-0.4537 \pm j1.5519, -0.4646, -3.6765\}$, using the approach described in \cite{wu}.

Now, by applying the target output controller to the full-order system (\ref{a1}), we obtain the closed-loop system
\begin{align}
	\label{a29}
	\dot{x}(t)=Ax(t)+B\bar{Z}_{\tau_u}\bar{F}_ox(t-0.5).\end{align}
The eigenvalues of the spectrum $\sigma\Big(A + B\bar{Z}_{\tau_u}\bar{F}_oe^{-\lambda \tau_u}\Big)$ lying to the right of the vertical line $\alpha = -4$ are computed as$$\{0.5, 0.2, -1, -0.4537 \pm j1.5519, -0.4646, -3.6765\},$$which contain the four stable eigenvalues of the reduced-order closed-loop system and the remaining three eigenvalues of $A$ at $\{0.5, 0.2, -1\}$.\}. Note that $\sigma(A)=\{-1, 1, 0.2, 0.5, 0.5\}$ and $\sigma(\bar{N}_o)=\{0.5, 1\}$. 

\begin{figure}[!h]
	\centering
	\includegraphics[width=0.9\linewidth]{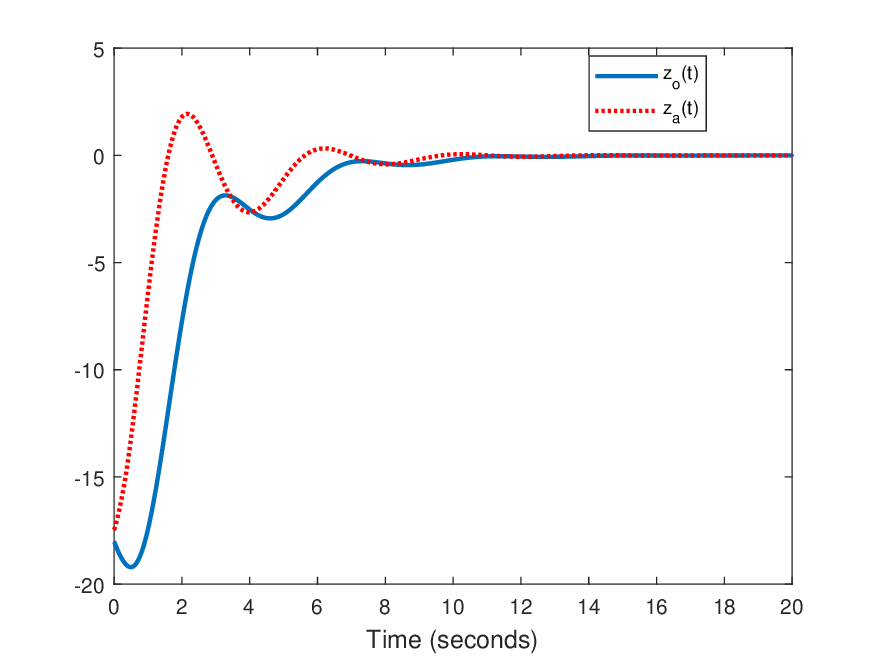}
	\caption{Trajectories of $z_o(t)$ and $z_a(t)$ under the proposed target output controller}
	\label{fig4paper3}
\end{figure}
We also simulate system (\ref{a29}) under the initial conditions $F_ox(0)=-18$, $Rx(0)=-17.5$. Figure~\ref{fig4paper3} displays the trajectories of $z_o(t)$ and $z_a(t)$, which clearly converge to $0$ as $t \rightarrow \infty$. Although system (\ref{a29}) is inherently unstable, the target output vector $\bar{z}_o(t)$ is successfully stabilized by the delayed target output controller $u(t-0.5)=\bar{Z}_{\tau_u}\bar{z}_o(t-0.5)$. This highlights a key advantage: regulating a specific target output is achievable even when forcing the entire state vector $x(t)$ to converge to zero is impossible due to the uncontrollability of the matrix pair $(A,B)$. Whereas, the triplet $(A,B,F_o)$ is target output
controllable \cite{Fernando2025}. Since target output controllability is a more relaxed condition than the controllability condition. This benefit previously reported in \cite{Fernando2025}.

Next, we consider the practical scenarios where neither $z_o(t)$ nor its delayed counterpart $z_o(t-\tau_u)$ is directly available for feedback. Additionally, the measured output vector is subject to the time delay modeled in (\ref{a2}). 

Based on (\ref{a23}), we consider two scenarios: Case 1, where $r \leq m$ and $Z_{\tau_u}$ is of full row rank, and Case 2, where $r > m$ and $Z_{\tau_u}$ is of full column rank. For Case 1, a functional observer is designed to directly estimate the delayed functional
\begin{align}
	\label{a30}
	z_r(t)=Z_{\tau_u}z_o(t-\tau_u)=Fx(t-\tau_u),
\end{align}
where $F:=Z_{\tau_u}F_o\in \mathbb{R}^{r\times n}$.

Consequently, the control law (\ref{a23}) is realized using the estimate $\hat{z}_r(t)$. For Case 2, the functional observer is instead designed to estimate the delayed functional
\begin{align}
	\label{a31}
	z_m(t)=F_ox(t-\tau_u),
\end{align}
allowing the controller (\ref{a23}) to be implemented via $Z_{\tau_u}\hat{z}_m(t)$. Section~\ref{asb} details the comprehensive design of the functional observers tailored to both scenarios.

To illustrate this, let us return to Example 3 and consider the case where the output vector is defined as in (\ref{a2}), with$$C=\begin{pmatrix}1.25 &2 &-4 &0.75 &4.75\\0 &0 &0 &0 &1\end{pmatrix}$$and $\tau_y=0.7\text{s}$. In this scenario, the time delay in the output vector exceeds the time delay in the input vector, resulting in mismatched time delays that occur simultaneously in both the control input and system output vectors. Note that the triplet $(A,C,\bar{F}_o)$ is functionally observable \cite{fern2}.

Thus, we design a functional observer to estimate the following functional
\[z_r(t)=\bar{F}x(t-0.5), \ \bar{F}:=\bar{Z}_{\tau_u}\bar{F}_o,\]
$$\bar{F}=\begin{pmatrix}-2.2841 &-2.0812 &4.1624 &0.2029 &-3.9596\\1.3870 &1.1412 &-2.2824 &-0.2459 &2.0365\end{pmatrix}$$ by utilizing the delayed output vector $y(t)=Cx(t-0.7)$.

For the detailed design methodology using observer (\ref{a11})-(\ref{a12}), the reader is referred to Scenario 2 of Section \ref{asb}. Following this procedure, we obtain the following functional observer
\begin{align}
	\hat{z}_r(t)&= w(t)+\begin{pmatrix}-1.2036 &1.0578\\0.7343 &-0.7072\end{pmatrix}y(t-0.5),\nonumber\\
	\dot{w}(t)&=\begin{pmatrix}1.4131 &0.7535\\-0.5007 &0.0869\end{pmatrix}w(t)\nonumber\\&+\begin{pmatrix}-1.4211 &  -0.2267\\
		0.3799  & -0.6415\end{pmatrix}w(t-0.7)\nonumber\\&+\begin{pmatrix}-1.2936 &   2.0197\\
		0.6935 &   -1.2983\end{pmatrix}y(t-0.5)\nonumber\\&+\begin{pmatrix}1.5441  & -1.3430\\
		-0.9282  &  0.8555\end{pmatrix}y(t-1.2) \nonumber\\&+\begin{pmatrix}-4.1624  & -3.7567\\
		2.2824 &    1.7906\end{pmatrix}u(t-1)   	\nonumber\\&+\begin{pmatrix}4.8146  &  5.5623\\
		-2.9370  & -3.3312\end{pmatrix}u(t-1.7). \nonumber  
\end{align}
Finally, the target output controller$$u(t-0.5)=\bar{Z}_{\tau_u}\bar{z}_o(t-0.5)$$is implemented using the functional observer designed above. To illustrate the validity of this observer-based control scheme, the trajectories of $z_o(t)$ and $z_a(t)$ are displayed in Figure \ref{fig5paper3}. The results clearly demonstrate that asymptotic stability of the target output $z_o(t)$ is achieved. Note that its convergence to zero is slower than in the case without an observer (see Figure \ref{fig4paper3}). This delay is due to the asymptotic stability of the observer error dynamics
\begin{align}\dot{e}(t) &= Ne(t) + N_{\tau_y}e(t-0.7)=\begin{pmatrix} 1.4131 & 0.7535 \\ -0.5007 & 0.0869 \end{pmatrix}e(t)\nonumber\\&+ \begin{pmatrix} -1.4211 & -0.2267 \\ 0.3799 & -0.6415 \end{pmatrix} e(t-0.7), \nonumber\end{align}
where the stable dominant eigenvalues (calculated using \cite{wu}) are located at $\{-0.2916 \pm j0.916\}$, resulting in a longer settling time.
\begin{figure}[!h]
	\centering
	\includegraphics[width=0.9\linewidth]{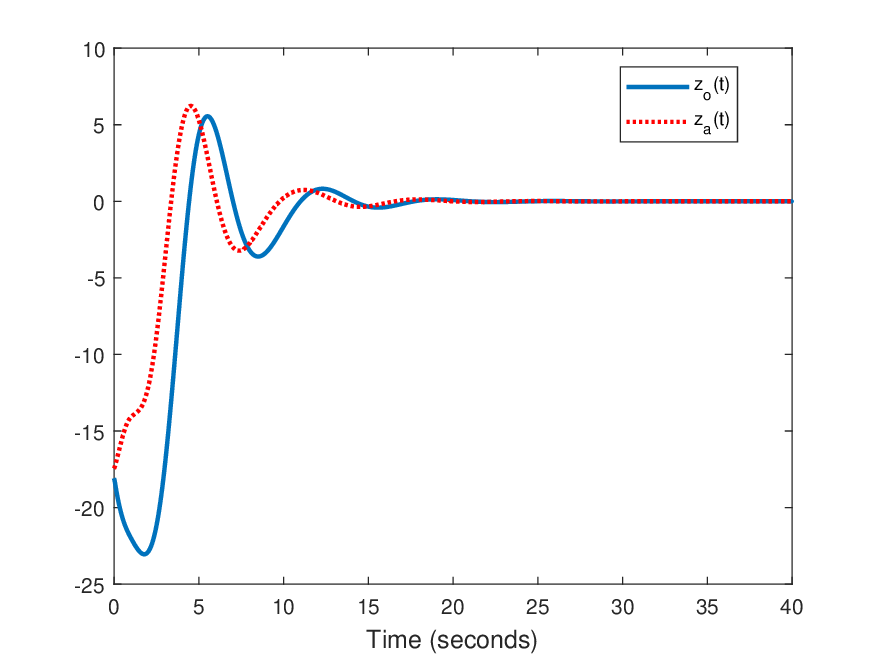}
	\caption{Trajectories of $z_o(t)$ and $z_a(t)$ under the proposed observer-based delayed target output controller}
	\label{fig5paper3}
\end{figure}

\subsection{Design of Delayed Functional Observers via Row-Space State Projection}\label{asd}
Section \ref{asc} presented the design of delayed functional observers for estimating the target output controllers (\ref{a30})-(\ref{a31}) using the full-order system (\ref{a1})-(\ref{a2}).

Alternatively, consider the reduced-order subsystem (\ref{a27}), constructed by projecting the full state dynamics onto the row space of $F_o$. To this end, we define the delayed output vector
\begin{align} \label{a32}
	y_o(t)=K_oy(t)=C_oz_o(t-\tau_y),
\end{align}
where the matrices $K_o\in\mathbb{R}^{l\times p}$ (with $0<l\leq p$) and $C_o\in\mathbb{R}^{l\times m}$ are to be determined. From (\ref{a32}), it follows that $K_o$ and $C_o$ must satisfy the algebraic constraint
\begin{align} \label{a33}
	K_oC=C_oF_o.
\end{align}
Equation (\ref{a33}) can be expressed as follows
\begin{align} 	\label{a34}\begin{pmatrix} K_o &C_o \end{pmatrix}\begin{pmatrix} C\\-F_o \end{pmatrix}=\bf 0.\end{align}
Let $v := \mathrm{rank} \begin{pmatrix} C^{\mathsf T} & -F_o^{\mathsf T} \end{pmatrix}^{\mathsf T}$. Assuming $v < p+m$, the matrix in (\ref{a34}) is not of full row rank, ensuring the existence of a non-trivial solution. The general solution to (\ref{a34}) can then be expressed as
\begin{align} 	\label{a35}\begin{pmatrix} K_o &C_o \end{pmatrix}=Z_o\left(I_{p+m}-\begin{pmatrix} C\\-F_o \end{pmatrix}\begin{pmatrix} C\\-F_o \end{pmatrix}^-\right),\end{align}
where $Z_o\in\mathbb{R}^{l\times (p+m)}$ is arbitrary.

Consequently, we obtain the following reduced-order system
\begin{align}
	\label{a36}
	\dot{z}_o(t)&= N_oz_o(t) +B_ou(t-\tau_u),\\
	\label{a37}y_o(t)&=C_oz_o(t-\tau_y).
\end{align}
To estimate $z_o(t)$ or its delayed counterpart from the reduced-order system, the matrix pair $(N_o,C_o)$ must be observable. This, in turn, requires the functional observability \cite{fern2} of the triplet $(A,K_oC,F_o)$. A logical prerequisite for this condition is that the reduced-order system restricts itself to a subset of $l$ outputs from the original $p$ outputs, defined by $y_o(t)=K_oy(t)$. Consequently, the functional observability of $(A,K_oC,F_o)$ serves as the necessary condition to successfully estimate $z_o(t)$ using this reduced set of outputs.

Based on (\ref{a36})-(\ref{a37}), we can design an observer to estimate either $z_o(t-\tau_u)$ or $u(t-\tau_u)$, thereby enabling the implementation of the output feedback controller. Because these observers are synthesized using the $m$-dimensional reduced-order system \eqref{a36}-\eqref{a37} and its lower-dimensional measurement vector $y_o(t)$, they possess a lower architectural complexity than those observers designed for system \eqref{a1}, which uses the entire output vector $y(t)$.

Let us return to Example 3 where we obtained
\[\bar{N}_o=\begin{pmatrix}0 &1\\-0.5 &1.5\end{pmatrix}, \bar{B}_o=\begin{pmatrix}2 &3\\2 &2.5\end{pmatrix}.\] With $C$ and $\bar{F}_o$ as given, the matrix $\begin{pmatrix} C\\-\bar{F}_o \end{pmatrix}$ is not full row rank. Thus, from (\ref{a35}) by substituting $\bar{F}_o$ for $F_o$, and by letting
\[Z_o=\begin{pmatrix} 3 &0 &0 &0 \end{pmatrix}\] we  obtain
\[K_o=\begin{pmatrix} 1 &0 \end{pmatrix}, \ C_o=\begin{pmatrix} 1 &1 \end{pmatrix}.\]
Note that the pair $(\bar{N}_o, C_o)$ is observable. Utilizing the low-dimensional matrices $\bar{N}_o$, $\bar{B}_o$, and $C_o$, we design an observer to estimate the target control law$$z_r(t)=u(t-0.5)=\bar{Z}_{\tau_u}\bar{z}_o(t-0.5),$$
where
$$\bar{Z}_{\tau_u}=\begin{pmatrix} 2.8927 &-4.9739\\-2.1246 &3.2658 \end{pmatrix}.$$
With $\tau_u=0.5\text{s}$, $\tau_y=0.7\text{s}$, we employ observer (\ref{a11})-(\ref{a12}) to estimate $z_r(t)$. Based on the systematic design procedure detailed in \cite{trinhnam26}, the LMI in Lemma 11 \cite{trinhnam26} is feasible for $\lambda=1$, we obtain the following observer
\begin{align}
	\hat{z}_r(t)&= w(t)+\begin{pmatrix}-3.3192\\
		2.1487\end{pmatrix}y_o(t-0.5),\nonumber\\
	\dot{w}(t)&=\begin{pmatrix}1.4131 &0.7535\\-0.5007 &0.0869\end{pmatrix}w(t)\nonumber\\&+\begin{pmatrix}-1.4211 &  -0.2267\\
		0.3799  & -0.6415\end{pmatrix}w(t-0.7)\nonumber\\&+\begin{pmatrix}-1.1019\\
		0.4612\end{pmatrix}y_o(t-0.5)+\begin{pmatrix}4.2300\\
		-2.6392\end{pmatrix}y_o(t-1.2) \nonumber\\&+\begin{pmatrix}-4.1624  & -3.7567\\
		2.2824 &    1.7906\end{pmatrix}u(t-1)   	\nonumber\\&+\begin{pmatrix}13.2768 &  18.2556\\
		-8.5947 & -11.8178\end{pmatrix}u(t-1.7). \nonumber  
\end{align}
The above observer utilizes $y_o(t)$ but not the entire vector $y(t)$ hence it achieves a lower structural complexity than an observer designed for the full-order system, which uses the entire output vector $y(t)$.

Conversely, it is worth noting that the triplet $(A,C_1,\bar{F}_o)$--where $C_1$ denotes the first row of $C$--is functionally observable. Synthesizing an observer \eqref{a11}-\eqref{a12} based on $A$, $C_1$, $B$, and $\bar{F}_o$ yields an estimator identical to the aforementioned observer (with the LMI in Lemma 11 \cite{trinhnam26} being feasible for $\lambda=1$). Consequently, both approaches offer equivalent and viable alternatives for designing delayed functional observers.

\textit{Remark 6:} We address the case where, for a given $F_o$, the matrix$$\begin{pmatrix} C \\ -F_o \end{pmatrix}$$has full row rank as well as condition (\ref{a25}) is violated. In this situation, we enlarge the target output vector by augmenting $F_o$ with selected rows (or linear combinations of rows) of $C$ along with their corresponding observability subspaces. Let $C_s$ denote a submatrix consisting of one or more rows (or linear combinations of rows) of $C$. We define the augmented observability matrix as
\begin{equation}	\mathcal{O}_{(A,F_o,C_s)}:=	\begin{pmatrix}
		F_o \\ F_oA \\ \vdots \\ F_oA^{n-1} \\
		C_s \\ C_sA \\ \vdots \\ C_s A^{n-1}
	\end{pmatrix},
\end{equation}
and its rank as
$$q := \mathrm{rank}\left(\mathcal{O}_{(A,F_o,C_s)}\right).$$
For the case where no row of $C$ lies within the row space of $F_o$, the rows of $C_s$ are linearly independent of $F_o$. Consequently, we can construct a new full-row-rank matrix, denoted as $\tilde{F}_o$, by selecting $q$ linearly independent rows from $\mathcal{O}_{(A,F_o,C_s)}$ such that the rows of both $F_o$ and $C_s$ are preserved, i.e.,
\begin{equation*}
	\tilde{F}_o =
	\begin{pmatrix} F_o \\ C_s \\ R \end{pmatrix}
	\in \mathbb{R}^{q\times n},
\end{equation*}
where $R$ is the $(q - m - \mathrm{rank}(C_s))\times n$ matrix created by basis rows of $ \{F_oA, F_oA^2, \ldots, F_oA^{n-1},\allowbreak C_sA, C_s A^2, \ldots\, C_sA^{n-1}\}.$
By construction, $\mathrm{row}(\tilde{F}_o) = \mathrm{row}(\mathcal{O}_{(A,F_o,C_s)})$.

	Since $\mathrm{rank}(\mathcal{O}_{(A,F_o,C_s)})=q$, we can select $q$
linearly independent rows of $\mathcal{O}_{(A,F_o,C_s)}$ that contain all the rows of
$F_o$. Denote the resulting full-row-rank matrix as
\[
\tilde{F}_o =
\begin{pmatrix} F \\ \tilde{R} \end{pmatrix} \in \mathbb{R}^{q\times n},
\qquad q \geq m,
\]
where $\tilde{R}\in\mathbb{R}^{q-m)\times n}$ collects the
additional rows. By construction,
$\mathrm{row}(\tilde{F}_o) = \mathrm{row}(\mathcal{O}_{(A,F_o,C_s)})$.

By the Cayley--Hamilton theorem,
both $F_oA^n$ and $C_sA^n$ are linear combinations of
$F_o, F_oA, \ldots, F_oA^{n-1}$ and $C_s, C_sA, \ldots, C_sA^{n-1}$, respectively.
Hence
\begin{equation*}
	\mathrm{row}(\tilde{F}_oA) \subseteq \mathrm{row}(\mathcal{O}_{(A,F_o,C_s)})
	= \mathrm{row}(\tilde{F}_o),
\end{equation*}
and therefore
\begin{equation}
	\label{rank}\mathrm{rank}\!\begin{pmatrix} \tilde{F}_oA \\ \tilde{F}_o \end{pmatrix}
	= \mathrm{rank}(\tilde{F}_o).
\end{equation}
By Lemma~\ref{lemma2p3}, there exists a matrix
$\tilde{N}_o:= \tilde{F}_oA\tilde{F}_o^- \in \mathbb{R}^{q\times q}$
satisfying $\tilde{N}_o \tilde{F}_o=\tilde{F}_oA$.

We can now find $K_o\in\mathbb{R}^{l\times p}$ and $C_o\in\mathbb{R}^{l\times q}$ via the following formula
\begin{equation}\label{a39}
	\begin{pmatrix} K_o & C_o \end{pmatrix}
	= \tilde{Z}\left( I_{p+q} -
	\begin{pmatrix} C \\ -\tilde{F}_o \end{pmatrix}
	\begin{pmatrix} C \\ -\tilde{F}_o \end{pmatrix}^{-} \right),
\end{equation}
where $\tilde{Z} \in \mathbb{R}^{l\times(p+q)}$ is arbitrary, and
$l = p + q - \mathrm{rank}\!\begin{pmatrix} C \\ \tilde{F}_o\end{pmatrix}>0$.
Finally, we obtain the following reduced-order system
\begin{align}
	\label{a40}
	\dot{\tilde{z}}_o(t) &= \tilde{N}_o\tilde{z}(t) + \tilde{B}_o u(t-\tau_u),\\
	\label{a41}
	y_r(t) &= K_o y(t) = C_o\,\tilde{z}_o(t - \tau_y),
\end{align}
where $\tilde{B}_o:=\tilde{F}_oB$, $\tilde{z}_o(t) := \tilde{F}_ox(t) =
\begin{pmatrix} z_o(t) \\ \tilde{z}_a(t) \end{pmatrix}$,
$\tilde{z}_a(t) := \begin{pmatrix}
	C_s\\R
\end{pmatrix}x(t)$.

\textit{Example 4:}  Let $A=\begin{pmatrix} 0  &   1 &    1\\
	0  &   -1  &   2\\
	0  &   -3  &   2 \end{pmatrix}$, $B=\begin{pmatrix}1\\2\\3 \end{pmatrix}$, $C=\begin{pmatrix} 0  &  0 &    1\\
	1  &  0 &0 \end{pmatrix}$, and $F_o=\begin{pmatrix} 0  & 1  & 1 \end{pmatrix}$.

In this example, $F_o$ does not contains any row or combinations of $C$, i.e., $\mathrm{rank}\begin{pmatrix}F_o\\ C\end{pmatrix}=3$. Furthermore, $\mathrm{rank}\begin{pmatrix}F_oA\\ F_o\end{pmatrix}\neq \mathrm{rank}\begin{pmatrix}F_o\end{pmatrix}$. So as discussed in Remark 6, we enlarge the target output vector. 

First, let us pick $C_s= C_1=\begin{pmatrix}
	0&0&1
\end{pmatrix}$.
We note that, the matrix
\begin{equation*}
	\mathcal{O}_{(A,F_o,C_1)} = \begin{pmatrix}
		0& 1& 1\\
		0& -4& 4\\
		0& -8& 0\\
		0& 0& 1\\
		0& -3& 2\\
		0& -3& -2
	\end{pmatrix}
\end{equation*}	
has $q = \mathrm{rank}(\mathcal{O}_{(A,F_o,C_1)}) = 2$. Hence, we can choose 
\begin{eqnarray*}
	\tilde{F}_o = \begin{pmatrix} F_o \\ C_1 \end{pmatrix}
	= \begin{pmatrix}
		0&1 &1 \\
		0 & 0 & 1	\end{pmatrix}.
\end{eqnarray*}
It is easy to check that the rank condition (\ref{rank}) is satisfied, and also the triplet $(A,C_1,F_o)$ is functionally observable. We obtain 
\begin{eqnarray*}
	\tilde{N}_o = \tilde{F}_oA\tilde{F}_o^- =
	\begin{pmatrix}
		 -4 & 8 \\
		 -3 & 5 
	\end{pmatrix},
	\
	\tilde{B}_o = \tilde{F}_oB = \begin{pmatrix} 5 \\ 3 \end{pmatrix}.
\end{eqnarray*}
We note that $l = p + q - \mathrm{rank}\!\begin{pmatrix} C \\ \tilde{F}_o\end{pmatrix} = 1.$ By choosing $Z = \begin{pmatrix} 1 & 1 & 1 & 1 \end{pmatrix}$ and solving \eqref{a39}, we obtain $K_0 = \begin{pmatrix}
	1&0
\end{pmatrix}$, and
$C_o = \begin{pmatrix} 0 & 1 \end{pmatrix}$.

The resulting second-order system is
\begin{align*}
	\dot{\tilde{z}}_o(t) &=
	\begin{pmatrix}
		-4 & 8 \\
		-3 & 5 
	\end{pmatrix}\tilde{z}_o(t)
	+ \begin{pmatrix} 5 \\ 3 \end{pmatrix} u(t-\tau_u), \\
	y_o(t) &= y_1(t) =
	\begin{pmatrix} 0 & 1 \end{pmatrix} \tilde{z}_o(t-\tau_y).
\end{align*}
Note that the pair $(\tilde{N}_o, C_o)$ is observable (the triplet $(A,C_1,F_o)$ is functionally observable). Finally, selecting $C_s=C_2$ recovers all measurements but yields a third-order system. Because this system is equivalent in dimension to the original state space, it provides no reduction in order.

Based on the reduced-order system derived above, we now design an observer-based controller to drive the target output vector to the origin ($z_o(t) \to \mathbf{0}$ as $t \to \infty$) from an arbitrary initial condition $F_ox(0)$. For illustrative purposes, let the time delays be given as $\tau_u = 0.5\text{s}$ and $\tau_y = 0.7\text{s}$. Because $\tau_y > \tau_u$, this scenario effectively represents a system with mismatched time delays occurring simultaneously in both the control input and system output vectors.

Note that $\tilde{N}_o$ is not Hurwitz. With $\tau_u=0.5\text{s}$, the LMI in Lemma 11 \cite{trinhnam26} is feasible for $\lambda=1$, and we obtain the following target output controller 
\[u(t-0.5)=\tilde{Z}_{\tau_u}\tilde{z}_o(t-0.5),\]
where

\[\tilde{Z}_{\tau_u}=\begin{pmatrix} 0.4161   &-0.9443 \end{pmatrix}.\]
Consequently, the target output controller guarantees the asymptotic stability of the following augmented target output vector
\begin{align}
	\dot{\tilde{z}}_o(t)&=\begin{pmatrix}
		-4 & 8 \\
		-3 & 5 
	\end{pmatrix}\tilde{z}_o(t)+\begin{pmatrix}2.0806 &  -4.7213\\
	1.2484 &  -2.8328 \end{pmatrix}\tilde{z}_o(t-0.5).\nonumber\end{align}
For the above time-delay system, the eigenvalues located to the right of the vertical line $\alpha = -4$ are calculated as $\{-0.3668 \pm j2.1797, -1.6662\}$, using the approach described in \cite{wu}.

Now, by applying the same target output controller to the full-order system (\ref{a1}), we obtain the closed-loop system
\begin{align}
	\dot{x}(t)=Ax(t)+B\tilde{Z}_{\tau_u}\tilde{F}_ox(t-0.5).\nonumber \end{align}
The eigenvalues of the spectrum $\sigma\Big(A + B\bar{Z}_{\tau_u}\bar{F}_oe^{-\lambda \tau_u}\Big)$ lying to the right of the vertical line $\alpha = -4$ are computed as$$\{0, -0.3668 \pm j2.1797, -1.6662\},$$which contain the three stable eigenvalues of the reduced-order closed-loop system. Figure~\ref{fig6paper3} displays the trajectories of $z_o(t)$ which clearly converge to $0$ as $t \rightarrow \infty$. 
\begin{figure}[!h]
	\centering
	\includegraphics[width=0.9\linewidth]{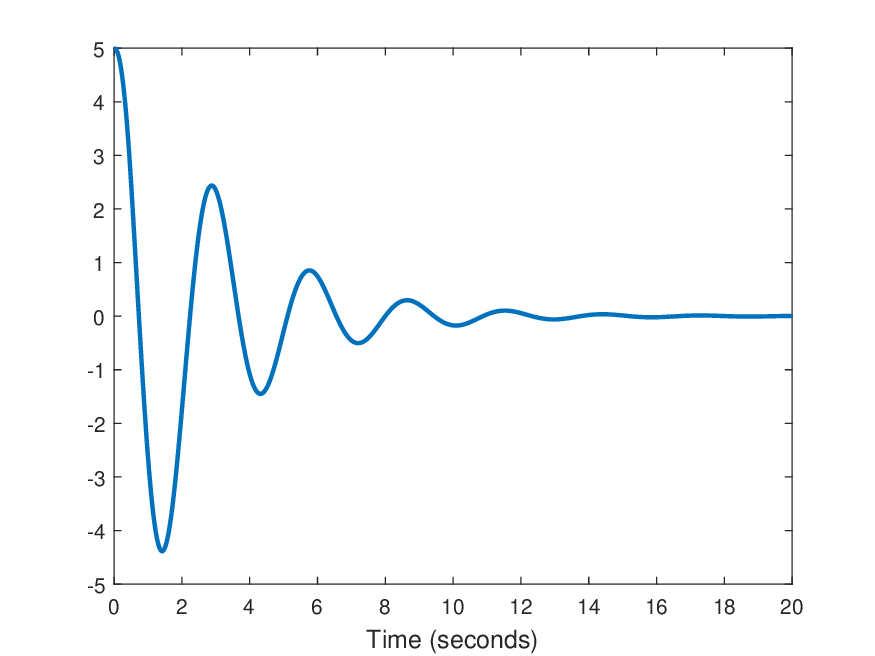}
	\caption{Trajectories of $z_o(t)$ under the proposed target output controller}
	\label{fig6paper3}
\end{figure}

Next, based on the derived second-order system, we design a functional observer to estimate the control input functional defined as$$z(t) := u(t-0.5) = \tilde{Z}_{\tau_u}\tilde{z}_o(t-0.5),$$where the observer utilizes only one delayed output measurement$$y_o(t) = y_1(t) = \begin{pmatrix} 0 & 1 \end{pmatrix} \tilde{z}_o(t-0.7).$$

As discussed in Scenario 2 of Section \ref{asb}, we employ observer (\ref{a11})-(\ref{a12}) to estimate $z(t)$, and we obtain the following second-order observer
\begin{align}
	\hat{z}(t)&= w_1(t)-0.0409y_1(t-0.5),\nonumber\\
	\dot{w}(t)&=\begin{pmatrix} 2.8076  &  1.2588\\-7.2094 &   -1.8076\end{pmatrix}w(t)\nonumber\\&+\begin{pmatrix}-0.2946 &  -0.2968\\
		0.9337  & -0.2018\end{pmatrix}w(t-0.7)\nonumber\\&+\begin{pmatrix}0.2712\\
		0.4964\end{pmatrix}y_1(t-0.5)+\begin{pmatrix}-0.0264\\
		-0.0643\end{pmatrix}y_1(t-1.2) \nonumber\\&+\begin{pmatrix}-0.7522\\
		3\end{pmatrix}u(t-1)+\begin{pmatrix}0.1226\\
		-0.3885\end{pmatrix}u(t-1.7). \nonumber  
\end{align}
Finally, the target output controller$$u(t-0.5)=\tilde{Z}_{\tau_u}\tilde{z}_o(t-0.5)$$is implemented using the functional observer designed above. Note that matrix $M=\begin{pmatrix}-0.0409\\0.1295
\end{pmatrix}$. To illustrate the validity of this observer-based control scheme, the trajectories of $z_o(t)$ are displayed in Figure \ref{fig7paper3}. The results clearly demonstrate that asymptotic stability of the target output $z_o(t)$ is achieved.
\begin{figure}[!h]
	\centering
	\includegraphics[width=0.9\linewidth]{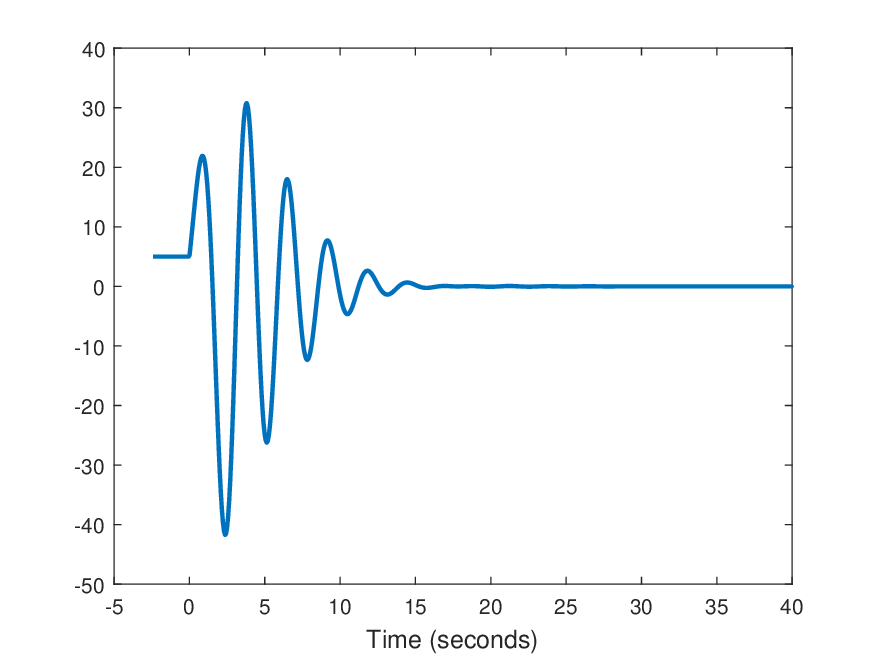}
	\caption{Trajectories of $z_o(t)$ under the proposed observer-based target output controller}
	\label{fig7paper3}
\end{figure}

\textit{Example 5:}  We adopt Example 4 from \cite{Fernando2025} to design an observer-based target controller for the system where it is subject to mismatched time delays in both the state and output vectors, where $\tau_u=1.7\text{s}$ and $\tau_y=3\text{s}$. The system matrices are given by

$A=\begin{pmatrix} -0.5 &0.5 & -1 &-0.5 &0.5\\-0.7 &-0.5 &1.4 &0.7 &-0.7\\-0.6 &0 &0.2 &0.6 &-0.6\\0.25 &0.5 &-1 &-1.25 &0.75\\-0.25 &0 &0 &0.25 &-0.75\end{pmatrix}$,

$B=\begin{pmatrix}1 &-1\\2 &1\\0.5 &1\\1 &-1\\0 &2 \end{pmatrix}$, $F_o^{\mathsf T}=C^{\mathsf T}=\begin{pmatrix} 0.5 &-0.5\\0 &0\\0 &2\\-0.5 &0.5\\0.5 &-0.5\end{pmatrix}$.

While the system is unstable, uncontrollable, and unobservable, the triplet $(A,B,F_o)$ is target output controllable \cite{Fernando2025} and $(A,C,F_o)$ is functionally observable. Thus, a static output feedback controller can be designed by placing two poles of a second-order subsystem anywhere in the complex plane, driving the target output vector to zero ($z_o(t) \to \mathbf{0}$) from any initial state $F_ox(0)$ \cite{Fernando2025}.

Now, consider mismatched state and output time delays where $\tau_u=1.7\text{s}$ and $\tau_y=3\text{s}$ ($\tau_y>\tau_u$). Because of the heavy measurement delay, $z_o(t-\tau_u)$ is unavailable, and only $z_o(t-\tau_y)$ is known. Therefore, $y(t)$ must be used to reconstruct a less-delayed functional estimate of $z_o(t-\tau_u)$ to implement the control law $u(t-\tau_u)$.

First, given $F_o$ and $A$, condition (\ref{a25}) is satisfied. Furthermore, since $F_o=C$, we obtain the following second-order time-delay system
\begin{align}
	\dot{z}_o(t) &= N_oz_o(t) + B_o u(t-1.7),\nonumber\\
y(t) &= C_o\,z_o(t - 3),\nonumber
\end{align}
where 

$N_o:=F_oAF_o^-=\begin{pmatrix} -1 &0\\-1.2 &0.2 \end{pmatrix}$, $B_o:=F_oB=\begin{pmatrix} 0 &1\\1 &1 \end{pmatrix}$ and  $C_o=I_2$.

Let us now design a target control law 
\[u(t-1.7)=Z_{\tau_u}z_o(t-1.7)\]
to stabilize the above time-delay system.  With $\tau_u=1.7\text{s}$, the LMI in Lemma 11 \cite{trinhnam26} is feasible for $\lambda=1$, and we obtain
\[Z_{\tau_u}=\begin{pmatrix} 0.5718 &  -0.3875\\-0.2346 &0.0891 \end{pmatrix}.\]

Applying the control law $u(t-1.7)=Z_{\tau_u}z_o(t-1.7)=Z_{\tau_u}F_ox(t-1.7)$ to the full-order system $\dot{x}(t)=Ax(t)+Bu(t-1.7)$, Figure~\ref{fig8paper3} displays the trajectories of the vector $z_o(t)$, which clearly converges to $0$ as $t \rightarrow \infty$. 
\begin{figure}[!h]
	\centering
	\includegraphics[width=0.9\linewidth]{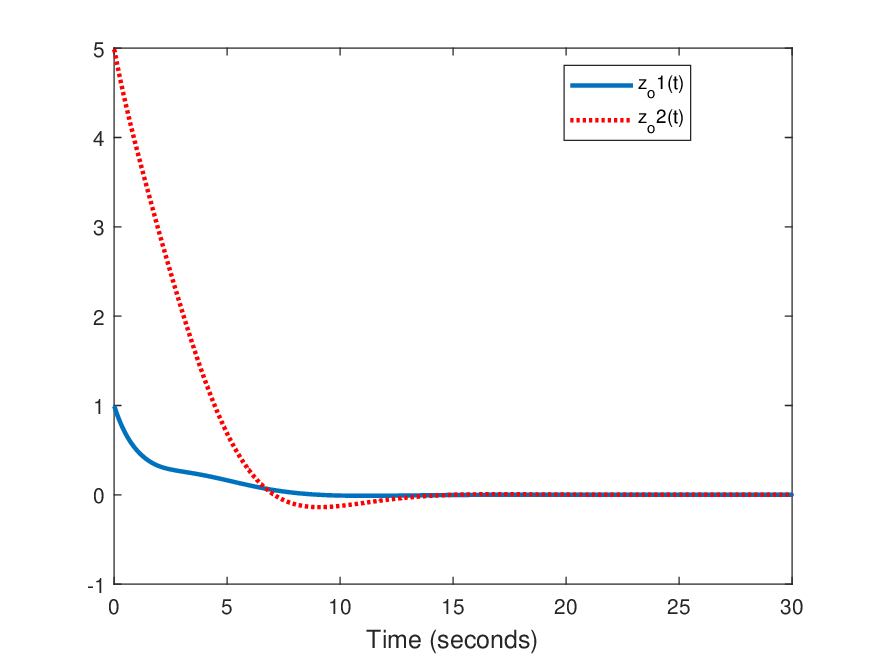}
	\caption{Trajectories of vector $z_o(t)$ under the proposed target output controller}
	\label{fig8paper3}
\end{figure}

Although $F_o=C$, the heavy measurement delay ($\tau_y > \tau_u$) prevents the direct implementation of the designed control law. This limitation arises because the required state vector $z_o(t-1.7)$ is unavailable, and only $z_o(t-3)$ can be accessed. To overcome this, we can leverage the framework developed in Scenario 2 (Section \ref{asb}) and utilize either observer (\ref{a11})-(\ref{a12}) or (\ref{a15}) (since $C_o= I_2$) to estimate the following functional which is the target control law
\[z(t)=u(t-1.7)=Z_{\tau_u}z_o(t-1.7).\]

Owing to the utilization of multiple internal delays, observer (\ref{a15}) exhibits a faster convergence rate to zero than observer (\ref{a11})-(\ref{a12}). For a detailed analysis, the reader is referred to Section V of \cite{trinhnam26}.

Thus, let us now employ observer (\ref{a15}), and we obtain the following observer (where the LMI in Lemma 13 in \cite{trinhnam26} is feasible for $\tau_u=1.7\text{s}$, $\tau_y=3\text{s}$ and $\lambda=1$)
\begin{align}
	\dot{\hat{z}}(t)&=\begin{pmatrix}0.6925   & 2.1428\\-0.3891 &-1.4925 \end{pmatrix}\hat{z}(t)\nonumber\\&+\begin{pmatrix} -0.5903  & -0.2003\\0.1539 &0.0018\end{pmatrix}\hat{z}(t-1.7) \nonumber\\&+ \begin{pmatrix} 0.1708    &0.0238\\-0.0595  & -0.0106 \end{pmatrix}\hat{z}(t-3)\nonumber\\&+\begin{pmatrix} 0.2905  & -0.2109\\-0.0876 &   0.0595 \end{pmatrix}y(t-0.4)\nonumber\\&+\begin{pmatrix} -0.0921  &  0.0641\\0.0315 &  -0.0221\end{pmatrix}y(t-1.7)\nonumber\\&+\begin{pmatrix} -0.3875  &  0.1843\\0.0891 &	-0.1456\end{pmatrix}u(t-3.4).\nonumber
\end{align}
To demonstrate the validity of this observer-based control scheme, the trajectories of $z_o(t)$ are displayed in Figure \ref{fig9paper3}. The results clearly show that the target output $z_o(t)$ achieves asymptotic stability.
\begin{figure}[!h]
	\centering
	\includegraphics[width=0.9\linewidth]{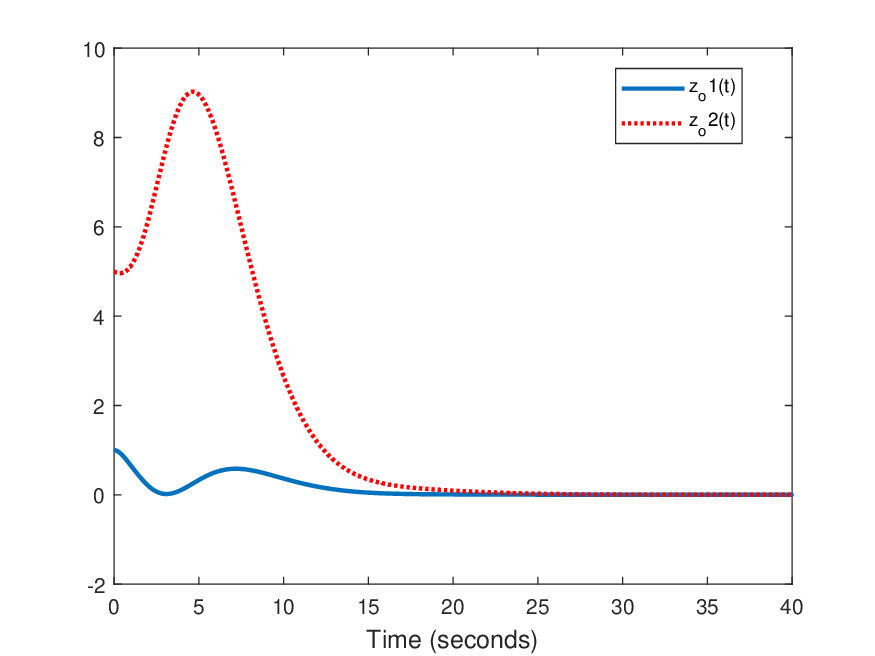}
	\caption{Trajectories of $z_o(t)$ under the proposed observer-based target output controller}
	\label{fig9paper3}
\end{figure}
 
\section{Conclusion}
This paper investigated the stabilization of linear systems subject to simultaneous, mismatched input and output time delays. First, an asymptotically stabilizing delayed state-feedback controller was synthesized using advanced LMI techniques. Second, this controller was realized via novel time-delay compensators \cite{trinhnam26}. This architecture accommodated an output delay $\tau_y$ independent of the input delay $\tau_u$, thereby enabling direct estimation of the delayed control law without restrictive matching conditions. Furthermore, this methodology was also extended to target output controllers \cite{Fernando2025} to account for simultaneous, mismatched time delays in both the control input and system output vectors. Finally, an alternative delayed functional observer design for  estimating target control law was introduced. It uses a reduced-order subsystem projected onto the row space of $F_o$, where the subsystem output acts as a subset of the full output vector.

\end{document}